\documentclass[journal, headsepline]{IEEEtran}

\usepackage{amsmath,amsfonts,amssymb,amsthm}
\usepackage{mathrsfs}
\usepackage{comment,blkarray}
\usepackage{multirow,bigdelim}
\usepackage{cite}
\usepackage[ruled,commentsnumbered, vlined]{algorithm2e}
\usepackage{tikz}
\usetikzlibrary{arrows,automata}
\usepackage[latin1]{inputenc}
\usepackage{verbatim}
\usepackage{graphicx}
\usepackage{cases}
\usepackage{booktabs}
\usepackage{caption}												
\usepackage{subcaption}	
\usepackage{etoolbox}

\usepackage{enumitem}
\usepackage{mathtools}

\makeatletter

\newtheorem{thm}{Theorem}
\newtheorem{prop}{Proposition}
\theoremstyle{definition}
\newtheorem{example}{Example}
\newtheorem{remark}{Remark}
\allowdisplaybreaks

\begin{document}

\title{Moving Target SAR Imaging Using Planar Arrays And Multidimensional Chinese Remainder Theorem (MD-CRT)--Part I: A General Framework}

\author{Guangpu Guo, \IEEEmembership{Graduate Student Member}, \IEEEmembership{IEEE}, and Xiang-Gen Xia, \IEEEmembership{Fellow}, \IEEEmembership{IEEE} 
       
\thanks{G. Guo and X.-G. Xia are with the Department of Electrical and Computer Engineering,
  University of Delaware, Newark, DE 19716, USA
  (e-mails: guangpu@udel.edu and xxia@ee.udel.edu).
  This work was supported in part by the National Science Foundation (NSF) under Grant CCF-2246917.
}
}

\maketitle

\begin{abstract}
In this two-part paper, we investigate synthetic aperture radar (SAR) moving target imaging using planar antenna arrays. For a target moving over a three-dimensional terrain, its accurate localization requires the joint estimation of the motion-induced cross-range shift and the target height. In Part I of this two-part paper, starting from the planar array imaging geometry and the corresponding signal model, we show that these two quantities can be unified into a two-dimensional parameter vector and represented, after two-dimensional discrete Fourier transform (2D-DFT) processing across the planar array, through a natural vector remainder formulation.
We first develop a general 2D-DFT matrix modulus framework and show that, in the two-dimensional setting, the associated 2D-DFT matrix modulus affects the propagation of vector remainder errors. Under a fixed array geometry and antenna number constraint, we derive an optimal construction of this matrix modulus and adopt it in the subsequent analysis. Under this construction, a single planar array provides only a folded estimate when the true parameter vector lies outside its unambiguous range. To resolve this ambiguity, we develop a multi-subarray framework in which multiple planar subarrays generate multiple vector remainders with different matrix moduli, and the desired parameter vector is recovered through the multidimensional Chinese remainder theorem (MD-CRT). To account for practical errors introduced by 2D-DFT quantization and additive noise, we further introduce an approximate 2D-DFT peak model for non-integer frequency vectors, incorporate robust MD-CRT, and establish sufficient conditions together with explicit reconstruction error bounds for both noiseless and noisy settings. Numerical results verify that the proposed multi-subarray framework enlarges the unambiguous range compared with a single planar array. The formulation developed in this paper also provides the basis for planar array design and performance analysis, which are further investigated in Part II of this two-part paper.
\end{abstract}

\begin{IEEEkeywords}
Three-dimensional (3D) moving target SAR imaging, two-dimensional discrete Fourier transform (2D-DFT), robust multidimensional Chinese remainder theorem (MD-CRT), planar antenna arrays.
\end{IEEEkeywords}


\section{Introduction}\label{s1}

Synthetic aperture radar (SAR) is widely used for terrain and moving target imaging. Compared with stationary targets, moving targets introduce additional phase terms into the received signals, which usually cause a displacement of a target in the azimuth (cross-range) direction of the SAR image. Accurate estimation of this motion-induced shift is therefore essential for moving target localization.

To address this problem, many moving target localization methods have been developed in the past. The early studies mainly considered single-channel SAR systems, in which moving target parameters are estimated from azimuth phase histories \cite{Werness1990,Barbarossa1992,Perry1999}. However, when the Doppler centroid exceeds the pulse repetition frequency (PRF), velocity ambiguity arises and the target position can no longer be uniquely determined \cite{Kirsch2003,Marques2005,Wang2006DualSpeed}. To improve localization performance, SAR systems with multiple receivers or antenna arrays have been studied, including space-time-frequency processing \cite{Barbarossa1994}, multi-channel SAR \cite{Ender1996}, and velocity SAR (VSAR) based on linear arrays \cite{Friedlander1997}.
In the VSAR framework, multiple complex SAR images are formed from different antenna elements, and the cross-range shift is estimated through a discrete Fourier transform (DFT) across the array. Because of the inherent $2\pi$ phase folding in the DFT, a large cross-range shift caused by a fast moving target may exceed the unambiguous range and lead to ambiguity. To deal with this problem, multi-frequency VSAR, dual-speed VSAR, and VSAR with nonuniform linear arrays have been proposed \cite{Wang2004MultiFreq,Li2007DualSpeed,gangli}. These methods exploit multiple folded observations with different moduli and apply the robust Chinese remainder theorem (CRT) \cite{robustCRT,radeee} to recover the shift over an enlarged unambiguous range.

The above methods are mainly designed for targets on the ground surface, where the main concern is the motion-induced cross-range shift. For targets moving over a three-dimensional terrain (or in air), however, the imaging geometry becomes more complex. In this case, accurate localization of a target requires not only the estimation of the cross-range shift, but also the estimation of the target height.
Conventional VSAR systems employ one-dimensional antenna arrays and therefore provide spatial sampling only along the azimuth direction. As a result, they cannot resolve the target height. One possible extension is to introduce an additional vertical antenna and estimate the height from interferometric phase \cite{xiaoweili}. In this way, the cross-range shift and the target height are obtained from observations associated with different array directions separately. 

In this paper, we consider a more general target motion model. Instead of assuming that targets move on the ground or at a fixed height, we study targets moving over an arbitrary terrain (or in air), such as mountainous environments, with three-dimensional velocities along arbitrary directions, as considered in \cite{WasSAR_3D_Trajectory}. Under this setting, accurate localization of a target requires the joint estimation of the motion-induced cross-range shift and the target height. To address this problem, we adopt a planar antenna array, which has been explored in recent studies, for example, \cite{Incremental3DSAR,FDA_MIMO_Planar}, and formulate the two quantities in a unified two-dimensional vector form. The key difference from linear antenna arrays is that a planar array provides spatial sampling in two dimensions and therefore matches the two-dimensional parameter structure of the problem well. This makes it possible to process the data jointly through a two-dimensional discrete Fourier transform (2D-DFT) \cite{ar}.

To establish this formulation, we start from the three-dimensional imaging geometry and the corresponding signal model, which follows the standard VSAR framework but is adapted to the planar array and three-dimensional-motion setting. Based on this model, we show that, after 2D-DFT processing across the planar array, the motion-induced cross-range shift and the target height admit a two-dimensional vector remainder formulation under a general 2D-DFT matrix modulus. This is different from the one-dimensional case. For a one-dimensional linear array, the associated DFT modulus reduces to a scalar. Thus, under a given antenna number constraint there is no additional structural freedom to choose, i.e., there is only one possibility for an absolute value of a scalar. In contrast, in the two-dimensional setting, the modulus becomes a matrix, and there are many choices of different matrices for a given absolute determinant value. In the meanwhile, the matrix structure directly affects how vector remainder errors propagate to the desired parameter vector. As a result, even for planar arrays with the same number of antenna elements, i.e., the 2D-DFT matrix modulus with the same absolute determinant value, different matrix structures may exhibit different robustness behavior. We therefore further study the choice of the associated 2D-DFT matrix modulus and derive an optimal construction, for a fixed array geometry and under a fixed antenna number constraint, that minimizes the error amplification factor. In the subsequent development, we adopt this optimal construction throughout.

As we shall see later, a single planar array may suffer from folding. When the true parameter vector lies outside a range, called \textit{unambiguous range}, the 2D-DFT yields only a folded estimate, namely, a vector remainder of the true parameter vector modulo a certain matrix. This leads to the ambiguities for both parameters in the parameter vector. To enlarge the unambiguous range, we employ multiple planar subarrays that generate multiple folded observations with different matrix moduli, and then combine them through the multidimensional Chinese remainder theorem (MD-CRT) \cite{MD1,gMDCRT}. This can be viewed as a two-dimensional extension of the idea in \cite{gangli}, in which multiple linear subarrays generate multiple scalar remainders that are then combined to resolve ambiguity by using the conventional 1D CRT. Comparing to the 1D linear arrays with scalars in~\cite{gangli}, 2D planar arrays with matrices studied in this two-part paper have much more degrees of freedom in the design and are much more challenging.

In practice, the 2D-DFT is evaluated on a discrete frequency grid, so the detected vector remainders are affected by quantization. In addition, additive noise may further perturb the detected vector remainders. Unlike the one-dimensional case, the quantization effect in two dimensions is not completely characterized by simple nearest grid rounding on the original coordinates, because for non-integer frequency vectors the exact 2D-DFT peak selection regions are generally skewed by the 2D-DFT matrix modulus. We therefore introduce an approximate 2D-DFT peak model for non-integer frequency vectors, which replaces the true frequency vector by a nearby integer vector and represents the detected peak through the corresponding standard vector remainder. This provides a tractable bridge from practical 2D-DFT peak detection to the subsequent MD-CRT reconstruction analysis. Under these errors, the classical MD-CRT is no longer sufficient, because even small vector remainder errors may cause a large reconstruction error. This makes robust MD-CRT necessary for robust recovery \cite{MD2,mstage-mdcrt}. Based on the proposed multi-subarray framework, we then analyze robust recovery in both noiseless (quantization errors) and noisy (additive noise in a signal) settings, and establish sufficient conditions together with explicit reconstruction error bounds for robust recovery. Particular planar subarray designs and simulation results will be presented in Part II~\cite{part2} of this two-part paper.

The remainder of this paper is organized as follows. Section~\ref{s2} briefly reviews the necessary notations and concepts on integer matrices and the MD-CRT used in this paper. Section~\ref{s3} introduces the imaging geometry and signal model, establishes the 2D-DFT-based vector remainder formulation, and studies the associated 2D-DFT matrix modulus design for a given array geometry. Section~\ref{s4} analyzes the ambiguity problem, presents an approximate 2D-DFT peak model for non-integer frequency vectors, and develops the robust recovery framework based on multiple planar subarrays. Section~\ref{s5} presents some numerical results. Finally, Section~\ref{s6} concludes the paper.

\section{Preliminaries}\label{s2}

In this section, we briefly review several basic concepts on two-dimensional integer matrices that are needed for the MD-CRT framework used later. For further background on integer matrices and related results of the MD-CRT, we refer to~\cite{MD1,gMDCRT,MD2,mstage-mdcrt,guo25,primematrix,matrix}.

Throughout this section, $\mathbb{Z}$ and $\mathbb{R}$ denote the sets of integers and real numbers, respectively. Bold lowercase letters such as $\mathbf n$, $\mathbf f$, and $\mathbf r$ denote two-dimensional vectors, and bold uppercase letters such as $\mathbf M$, $\mathbf N$, and $\mathbf P$ denote $2\times2$ matrices, unless stated otherwise. The identity matrix is denoted by $\mathbf I$, the determinant of a matrix $\mathbf M$ is denoted by $\det(\mathbf M)$, the adjugate matrix is denoted by $\operatorname{adj}(\mathbf M)$, and $^{\top}$ denotes the transpose of a vector or matrix. For a vector $\mathbf a=[a_1,a_2]^\top\in\mathbb R^2$, $\|\mathbf a\|_\infty \triangleq \max\{|a_1|,|a_2|\}$, while, for a matrix $\mathbf A=[A_{ij}]\in\mathbb R^{2\times2}$, $\|\mathbf A\|_\infty \triangleq \max_{1\le i\le 2}\sum_{j=1}^2 |A_{ij}|$.

\begin{enumerate}[leftmargin=1.3em]

\item \textbf{Lattice:}
Given a nonsingular matrix $\mathbf M\in\mathbb R^{2\times2}$, the lattice generated by $\mathbf M$ is defined as
\[
\mathcal L(\mathbf M)=\{\mathbf M\mathbf n\mid \mathbf n\in\mathbb Z^2\}.
\]

\item \textbf{Fundamental parallelepiped (FPD):}
For any nonsingular matrix $\mathbf M\in\mathbb R^{2\times2}$, define
\[
\mathcal P(\mathbf M)
=
\left\{
\mathbf r\in\mathbb R^2
\mid
\mathbf r=\mathbf M\mathbf x,\ \mathbf x\in[0,1)^2
\right\}.
\]
If $\mathbf M\in\mathbb Z^{2\times2}$ is a nonsingular integer matrix, we further define the fundamental parallelepiped (FPD) of $\mathbf M$ as
\[
\mathcal N(\mathbf M)
\triangleq
\mathcal P(\mathbf M)\cap\mathbb Z^2.
\]
That is, $\mathcal N(\mathbf M)$ is the set of integer vectors contained in $\mathcal P(\mathbf M)$. It contains exactly $|\det(\mathbf M)|$ elements.

\item \textbf{Vector remainder representation:}
Let $\mathbf M\in\mathbb R^{2\times2}$ be a nonsingular matrix. Then any vector $\mathbf f\in\mathbb R^2$ can be uniquely written as
\[
\mathbf f=\mathbf M\mathbf n+\mathbf r,
\qquad
\mathbf n\in\mathbb Z^2,\ \mathbf r\in\mathcal P(\mathbf M).
\]
We write $\mathbf f\equiv\mathbf r\ \mathrm{mod}\ \mathbf M$,
and call $\mathbf r$ the \emph{vector remainder} of $\mathbf f$ modulo $\mathbf M$.

When $\mathbf M\in\mathbb Z^{2\times2}$ and $\mathbf f\in\mathbb Z^2$, the vector remainder $\mathbf r$ is also integer-valued, and hence $\mathbf r\in\mathcal N(\mathbf M)$.
Therefore, any integer vector $\mathbf f\in\mathbb Z^2$ admits the unique decomposition for $\mathbf M\in \mathbb{Z}^{2\times2}$ as
\[
\mathbf f=\mathbf M\mathbf n+\mathbf r,
\qquad
\mathbf n\in\mathbb Z^2,\ \mathbf r\in\mathcal N(\mathbf M).
\]
This integer vector remainder representation is the one used in the MD-CRT framework. The real-valued version above will be useful later when we analyze non-integer frequency vectors and quantization effects.

\item \textbf{Unimodular matrix:}
An integer matrix $\mathbf U\in\mathbb Z^{2\times2}$ is said to be unimodular if $|\det(\mathbf U)|=1$.

\item \textbf{Greatest common left divisor (gcld):}
Given two integer matrices $\mathbf M$ and $\mathbf N$, a nonsingular integer matrix $\mathbf G$ is called a common left divisor (cld) of $\mathbf M$ and $\mathbf N$ if both $\mathbf G^{-1}\mathbf M$ and $\mathbf G^{-1}\mathbf N$ are integer matrices. If every cld of $\mathbf M$ and $\mathbf N$ is also a left divisor of $\mathbf G$, then $\mathbf G$ is called the greatest common left divisor (gcld) of $\mathbf M$ and $\mathbf N$.

\item \textbf{Least common right multiple (lcrm):}
An integer matrix $\mathbf R$ is called a common right multiple (crm) of integer matrices $\mathbf M$ and $\mathbf N$ if there exist integer matrices $\mathbf P$ and $\mathbf Q$ such that
$\mathbf R=\mathbf M\mathbf P=\mathbf N\mathbf Q$.
If every crm of $\mathbf M$ and $\mathbf N$ is also a right multiple of $\mathbf R$, then $\mathbf R$ is called a least common right multiple (lcrm) of $\mathbf M$ and $\mathbf N$. Although the lcrm is not unique, the absolute determinant value $|\det(\mathbf R)|$ is uniquely determined.

\end{enumerate}

\section{Problem Setup}\label{s3}

In this section, we introduce the imaging geometry and the corresponding signal model for a moving target observed by a planar array in a SAR system. Based on this setup, we derive the antenna-dependent image expression and show that, across the planar array, the images admit a two-dimensional complex exponential structure. We then reformulate this structure in a general lattice-based 2D-DFT framework, establish the corresponding two-dimensional vector remainder formulation for the motion-induced cross-range shift and the target height, and analyze how the choice of the associated 2D-DFT matrix modulus affects the propagation of the vector remainder error. Under a fixed array geometry and antenna number constraint, we further derive an optimal construction of this matrix modulus. Finally, we illustrate the resulting planar array structure through an example.

\subsection{Geometry Model}\label{subsec:model}

\begin{figure}[htbp]
  \centering
  \includegraphics[width=0.7\linewidth]{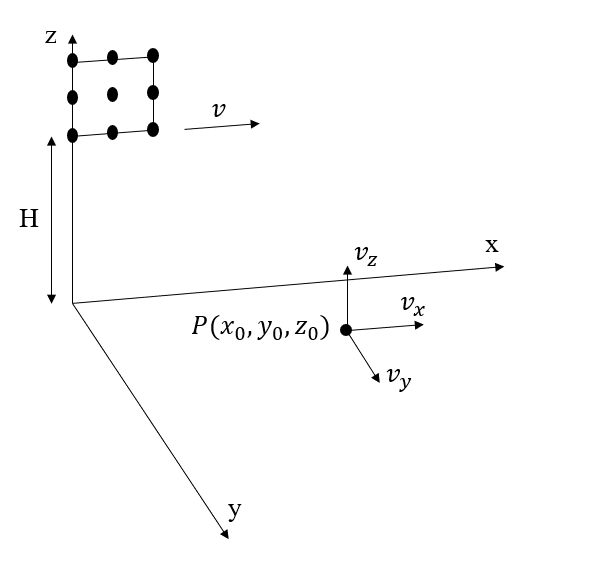}
  \caption{Geometry of the considered SAR imaging model.}
  \label{fig:model}
\end{figure}

The geometry is shown in Fig.~\ref{fig:model}.
The $x$-axis denotes the cross-range direction, the $y$-axis denotes the range direction,
and the $z$-axis denotes the vertical direction.

Consider a SAR platform moving with a constant velocity $v$
along the positive $x$-axis at a fixed altitude $H$.
A planar antenna array is placed on the platform.
All antenna elements lie in the $x$--$z$ plane with fixed positions and move together with the platform.

The antenna locations are modeled as a two-dimensional lattice generated by a matrix $\mathbf{M}\in \mathbb{R}^{2\times2}$,
where the entries of $\mathbf{M}$ are not necessarily integers, allowing for more general planar array designs.
This matrix $\mathbf{M}$ is called an array generating matrix or array geometry.
Each antenna element is indexed by an integer vector $\mathbf {u}\in\mathbb {Z}^2$,
and its planar coordinate is given by
\begin{equation}\label{array_geometry}
\mathbf M\mathbf u =
\begin{bmatrix}
x_{\mathbf u}\\
z_{\mathbf u}
\end{bmatrix}.
\end{equation}
At slow time $t=0$, the position of the $\mathbf u$-th antenna element is
\begin{equation}\notag
\mathbf r_{\mathbf u}(0)
=
[x_{\mathbf u},\,0,\,z_{\mathbf u}+H]^{\top}.
\end{equation}
Due to the platform motion, the antenna position at slow time $t$ becomes
\begin{equation}\notag
\mathbf r_{\mathbf u}(t)
=
[x_{\mathbf u}+vt,\,0,\,z_{\mathbf u}+H]^{\top}.
\end{equation}

We consider a single moving point target, denoted by $P$.
The initial position of the target is denoted by
\begin{equation}\notag
\mathbf r_P(0)=[x_0,\,y_0,\,z_0]^{\top},
\end{equation}
and the target moves with a constant velocity
\begin{equation}\notag
\mathbf v_P=[v_x,\,v_y,\,v_z]^{\top},
\end{equation}
as illustrated in Fig.~\ref{fig:model}.
Accordingly, the target position at slow time $t$ is given by
\begin{equation}\notag
\mathbf r_P(t)
=
[x_0+v_x t,\,y_0+v_y t,\,z_0+v_z t]^{\top}.
\end{equation}

\subsection{Signal Model and Basic Processing}

The $\mathbf{0}$-th antenna element serves as both transmitter and receiver, while all the other antenna elements act as receivers. Suppose that the $\mathbf{0}$-th antenna element transmits a linear frequency modulated (LFM) signal with carrier frequency $f_c$.

We next derive the signal model and the corresponding processed images. The processing follows the standard VSAR framework \cite{Friedlander1997,Wang2004MultiFreq,Li2007DualSpeed,gangli}, but the geometry is different: the target is allowed to move in three dimensions, and the antenna elements have nonzero height coordinates. As a result, additional phase terms appear.
For clarity, we briefly outline the key steps below and highlight the key differences.

Let $R_{\mathbf u}(t)$ denote the distance (range) between the $\mathbf u$-th antenna element and the moving target at slow time $t$. 
Under the far-field and small-aperture assumptions, $R_{\mathbf u}(t)$ can be approximated by a second-order Taylor expansion around the reference range $R=R_{\mathbf 0}(0)$, i.e., the distance between the target and the transmitter at time $t=0$.
Following the standard VSAR processing model \cite{Friedlander1997} after quadratic phase correction and subsequent range compression, the received signal phase at the $\mathbf u$-th antenna can be expressed as
\begin{equation}\label{eq:phase}
\phi_{\mathbf u}(t)
=
-\frac{2\pi}{\lambda}\big(R_{\mathbf 0}(t)+R_{\mathbf u}(t)\big),
\end{equation}
where $\lambda$ denotes the carrier wavelength.
Although the resulting phase expression has the same form as that in conventional VSAR models, the reference range $R$ and the distance $R_{\mathbf u}(t)$ here are defined based on the three-dimensional target motion and the planar array geometry.

Starting from~\eqref{eq:phase}, the second-order expansions of $R_{\mathbf 0}(t)$ and $R_{\mathbf u}(t)$ give
\begin{equation}\label{eq:phase_expand}
\begin{aligned}
\phi_{\mathbf u}(t)
\approx\;
-&\frac{4\pi R}{\lambda} -\frac{2\pi}{R\lambda}
\Big(
(v_x^2+v_y^2+v_z^2-2vv_x)t^2
\Big) \\
&-\frac{2\pi}{R\lambda}
\Big(
2(x_0v_x-x_0v+v_yy_0-Hv_z+z_0v_z)t
\Big) \\
&+\frac{2\pi}{R\lambda}
(x_0x_{\mathbf u}+z_0z_{\mathbf u})
+\zeta_{1,\mathbf u}(t)+\zeta_{2,\mathbf u}(t),
\end{aligned}
\end{equation}
where
\begin{equation}\label{eq:zeta1}
\zeta_{1,\mathbf u}(t)
=
-\frac{\pi}{R\lambda}
\big(
2v^2t^2+2x_{\mathbf u}vt+x_{\mathbf u}^2+z_{\mathbf u}^2+2Hz_{\mathbf u}
\big)
\end{equation}
contains only known platform parameters and antenna coordinates, and hence can be compensated in advance, and
\begin{equation}\label{eq:zeta2}
\zeta_{2,\mathbf u}(t)
=
\frac{2\pi}{R\lambda}
(v_xx_{\mathbf u}+v_zz_{\mathbf u})t
\end{equation}
is a coupling term between the target velocity and the array aperture.
Under the far-field and small-aperture assumptions, $|x_{\mathbf u}|,|z_{\mathbf u}|\ll R$, so $\zeta_{2,\mathbf u}(t)$ is small and can be neglected.
After these simplifications, the phase at the $\mathbf u$-th antenna can be approximated as
\begin{equation}\label{eq:phase_simplified}
\begin{aligned}
\phi_{\mathbf u}(t)
\approx
-&\frac{4\pi R}{\lambda}
-\frac{2\pi}{R\lambda}
\Big(
(v_x^2+v_y^2+v_z^2-2vv_x)t^2
+2(x_0v_x\\&-x_0v
+v_yy_0-Hv_z+z_0v_z)t
-x_0x_{\mathbf u}-z_0z_{\mathbf u}
\Big).
\end{aligned}
\end{equation}

As in the conventional VSAR processing \cite{Friedlander1997,Wang2004MultiFreq,Li2007DualSpeed,gangli}, the quadratic term in \eqref{eq:phase_simplified} is first estimated from the slow-time data using joint time-frequency analysis techniques and then compensated. 
We then take the Fourier transform, with respect to the slow time $t$, of the compensated version of $\exp(j\phi_{\mathbf u}(t))$ and obtain the complex image at the $\mathbf u$-th antenna,
\begin{equation}\label{eq:Su_initial}
\begin{aligned}
S_{\mathbf u}(n)
&=
\exp\!\left(
-j\frac{4\pi R}{\lambda}
+j\frac{2\pi}{R\lambda}(x_0x_{\mathbf u}+z_0z_{\mathbf u})
\right)
\\
&\quad\times
\delta\!\big(n-n_{x_0}-\Delta_{\rm shift}\big),
\end{aligned}
\end{equation}
where $n$ is the discrete cross-range index in the image and $\delta(\cdot)$ denotes the discrete delta function.

Here, $n_{x_0}$ is the discrete index corresponding to the true cross-range location $x_0$ of the moving target, namely,
\begin{equation}
x_0 = \Delta_x\, n_{x_0},
\end{equation}
where $\Delta_x$ denotes the cross-range resolution.
The quantity $\Delta_{\rm shift}$ denotes the motion-induced shift of the target from the true position $n_{x_0}$ in the SAR image, and is given by
\begin{equation}\label{eq:Deltashift}
\Delta_{\rm shift}
=
-\frac{x_0 v_x + y_0 v_y + z_0 v_z - H v_z}{v\,\Delta_x}.
\end{equation}
Once $\Delta_{\rm shift}$ is estimated, the true cross-range position $x_0$ can be readily recovered.

When $v_x=v_y=v_z=0$, the shift term $\Delta_{\rm shift}$ becomes zero, and~\eqref{eq:Su_initial} reduces to the stationary target case.

\subsection{2D-DFT Formulation}\label{subsec:md_dft}

From~\eqref{eq:Su_initial}, the image peak appears at the index
$n_{x_0}+\Delta_{\rm shift}$, which can be directly detected from the SAR image.
We therefore compensate the corresponding cross-range phase term by multiplying
\eqref{eq:Su_initial} with
\begin{equation}\label{eq:phase_comp}
\exp\!\left(
-j\frac{2\pi}{R\lambda}
\big(n_{x_0}+\Delta_{\rm shift}\big)\Delta_x\,x_{\mathbf u}
\right).
\end{equation}
Using $x_0=\Delta_x n_{x_0}$, the compensated image becomes
\begin{equation}\label{eq:Su_comp}
\begin{aligned}
S_{\mathbf u}(n)
&=
\exp\!\Big(
-j\frac{4\pi R}{\lambda}
+j\frac{2\pi}{R\lambda}
\big(
-\Delta_{\rm shift}\Delta_x\,x_{\mathbf u}
+z_0 z_{\mathbf u}
\big)
\Big)
\\
&\quad\times
\delta\!\big(
n-n_{x_0}-\Delta_{\rm shift}
\big).
\end{aligned}
\end{equation}
For convenience, define the antenna-independent phase term
\begin{equation}
C_0
=
\exp\!\left(
-j\frac{4\pi R}{\lambda}
\right),
\end{equation}
which does not affect the subsequent spatial processing across the planar array.
Then~\eqref{eq:Su_comp} can be rewritten as
\begin{equation}\label{eq:Su_comp_clean}
\begin{aligned}
S_{\mathbf u}(n)
&=
C_0
\exp\!\left(
j\frac{2\pi}{R\lambda}
\big(
-\Delta_{\rm shift}\Delta_x\,x_{\mathbf u}
+z_0 z_{\mathbf u}
\big)
\right)
\\
&\quad\times
\delta\!\big(n-n_{x_0}-\Delta_{\rm shift}\big).
\end{aligned}
\end{equation}

We now introduce the parameter vector
\[
\mathbf g
\triangleq
\begin{bmatrix}
-\Delta_{\rm shift}\Delta_x\\
z_0
\end{bmatrix},
\]
which jointly characterizes the motion-induced cross-range shift and the target height.
Recovering $\mathbf g$ is therefore equivalent to jointly estimating $\Delta_{\rm shift}$ and $z_0$.
Using the lattice representation of the antenna positions,
\[
\begin{bmatrix}
x_{\mathbf u}\\
z_{\mathbf u}
\end{bmatrix}
=
\mathbf M\mathbf u,
\]
the antenna-dependent phase terms in~\eqref{eq:Su_comp_clean} can be written as
\begin{equation}
-\Delta_{\rm shift}\Delta_x\,x_{\mathbf u}+z_0 z_{\mathbf u}
=
\mathbf g^\top \mathbf M\mathbf u
=
(\mathbf M^\top\mathbf g)^\top\mathbf u.
\end{equation}
Therefore, across the planar array, the image follows a two-dimensional complex exponential structure with respect to the antenna index $\mathbf u$, and can be expressed as
\begin{equation}\label{eq:Su_exp_structure}
S_{\mathbf u}(n)
=
C_0\,
\exp\!\left(
j\frac{2\pi}{R\lambda}
(\mathbf M^\top\mathbf g)^\top\mathbf u
\right)
\delta\!\big(n-n_{x_0}-\Delta_{\rm shift}\big).
\end{equation}
This structure suggests that the parameter vector $\mathbf g$ can be efficiently extracted through a 2D-DFT across the antenna indices $\mathbf{u}$.

To proceed, we first develop a general lattice-based 2D-DFT formulation for an arbitrary integer matrix modulus~\cite{ar}.
Define
\begin{equation}\label{eq:p_from_g_generalL_rev}
\mathbf p
=
\frac{1}{R\lambda}\,\mathbf M^\top\mathbf g
\in\mathbb R^2.
\end{equation}
Then~\eqref{eq:Su_exp_structure} can be rewritten as
\begin{equation}\label{eq:single_exp_signal_generalL}
S_{\mathbf u}(n)
=
C_0\,
\exp\!\left(
j2\pi\,\mathbf p^\top\mathbf u
\right)
\delta\!\big(n-n_{x_0}-\Delta_{\rm shift}\big).
\end{equation}
Let $\mathbf N\in\mathbb Z^{2\times2}$ be a nonsingular integer matrix.
To exploit the orthogonality over a finite lattice, we sample the antenna index on one FPD of $\mathbf N^\top$, i.e., $\mathcal N(\mathbf N^\top)$, and define the 2D-DFT over $\mathcal N(\mathbf N^\top)$, for $\mathbf k\in\mathcal N(\mathbf N)$, by
\begin{equation}\label{eq:mddft_def_generalL_rev}
\tilde S(\mathbf k,n)
=
\sum_{\mathbf u\in\mathcal N(\mathbf N^\top)}
S_{\mathbf u}(n)\,
\exp\!\left(
-j2\pi\,\mathbf k^\top\mathbf N^{-\top}\mathbf u
\right).
\end{equation}
Substituting~\eqref{eq:single_exp_signal_generalL} into~\eqref{eq:mddft_def_generalL_rev} gives
\begin{equation}\label{eq:mddft_expand_generalL_1}
\begin{aligned}
\tilde S(\mathbf k,n)
&=
C_0\,
\delta\!\big(n-n_{x_0}-\Delta_{\rm shift}\big)
\\
&\quad\times
\sum_{\mathbf u\in\mathcal N(\mathbf N^\top)}
\exp\!\left(j2\pi\,\mathbf p^\top\mathbf u\right)
\exp\!\left(-j2\pi\,\mathbf k^\top\mathbf N^{-\top}\mathbf u\right).
\end{aligned}
\end{equation}

Define
\begin{equation}\label{eq:q_def_generalL}
\mathbf q
\triangleq
\mathbf N\mathbf p
=
\frac{1}{R\lambda}\,\mathbf N\mathbf M^\top\mathbf g.
\end{equation}
Since
\[
\mathbf p^\top\mathbf u
=
(\mathbf N^{-1}\mathbf q)^\top\mathbf u
=
\mathbf q^\top\mathbf N^{-\top}\mathbf u,
\]
the inner summation in the right hand side of~\eqref{eq:mddft_expand_generalL_1} can be rewritten as
\begin{equation}\label{eq:mddft_expand_generalL_2}
\begin{aligned}
&
\sum_{\mathbf u\in\mathcal N(\mathbf N^\top)}
\exp\!\left(j2\pi\,\mathbf q^\top\mathbf N^{-\top}\mathbf u\right)
\exp\!\left(-j2\pi\,\mathbf k^\top\mathbf N^{-\top}\mathbf u\right)
\\
=&
\sum_{\mathbf u\in\mathcal N(\mathbf N^\top)}
\exp\!\left(
j2\pi\,(\mathbf q-\mathbf k)^\top\mathbf N^{-\top}\mathbf u
\right).
\end{aligned}
\end{equation}

We first consider the ideal case where $\mathbf q$ in~\eqref{eq:q_def_generalL} is an integer vector.
Then the summation in~\eqref{eq:mddft_expand_generalL_2} is a finite sum of complex exponentials over $\mathcal N(\mathbf N^\top)$.
Let $\mathbf r\in\mathcal N(\mathbf N)$ be the vector remainder of $\mathbf q$ modulo $\mathbf N$.
When $\mathbf k=\mathbf r$, then $\mathbf q-\mathbf r=\mathbf N\mathbf z$ for some $\mathbf z\in\mathbb Z^2$, and hence
\[
(\mathbf q-\mathbf r)^\top\mathbf N^{-\top}\mathbf u
=
\mathbf z^\top\mathbf u
\in\mathbb Z
\]
for every $\mathbf u\in\mathbb Z^2$, so each complex exponential equals $1$.
When $\mathbf k\neq \mathbf r$, then the corresponding complex exponential is a nontrivial term on the finite lattice induced by $\mathbf N^\top$, and its sum over $\mathcal N(\mathbf N^\top)$ is zero.
Therefore,
\begin{equation}\label{eq:mddft_integer_case_generalL_rev}
\sum_{\mathbf u\in\mathcal N(\mathbf N^\top)}
\exp\!\left(
j2\pi\,(\mathbf q-\mathbf k)^\top\mathbf N^{-\top}\mathbf u
\right)
=
|\det(\mathbf N)|\,\delta(\mathbf k-\mathbf r),
\end{equation}
where $\mathbf r$ is the vector remainder of $\mathbf q$ modulo $\mathbf N$.
Substituting~\eqref{eq:mddft_integer_case_generalL_rev} into~\eqref{eq:mddft_expand_generalL_1}, we obtain
\begin{equation}\label{eq:2ddft_result_integer_case}
\tilde S(\mathbf k,n)
=
C_0\,|\det(\mathbf N)|
\,\delta(\mathbf k-\mathbf r)
\,\delta\!\big(n-n_{x_0}-\Delta_{\rm shift}\big),
\end{equation}
where $\mathbf r \equiv \mathbf q \ \mathrm{mod}\ \mathbf N$.

Therefore, after the 2D-DFT across the antenna index $\mathbf u$, the peak appears at $\mathbf k=\mathbf r$, from which the vector remainder $\mathbf r$ can be directly obtained.
Since $\mathbf q\equiv\mathbf r\ \mathrm{mod}\ \mathbf N$, it follows from~\eqref{eq:q_def_generalL} that
\begin{equation}
\frac{1}{R\lambda}\,\mathbf N\mathbf M^\top\mathbf g
\equiv
\mathbf r
\ \mathrm{mod}\ \mathbf N.
\end{equation}
Equivalently, there exists an integer vector $\mathbf n\in\mathbb Z^2$ such that
\begin{equation}
\frac{1}{R\lambda}\,\mathbf N\mathbf M^\top\mathbf g
=
\mathbf N\mathbf n+\mathbf r.
\end{equation}
Multiplying both sides by $R\lambda\,\mathbf M^{-\top}\mathbf N^{-1}$ yields
\begin{equation}
\mathbf g
=
R\lambda\,\mathbf M^{-\top}\mathbf n
+
R\lambda\,\mathbf M^{-\top}\mathbf N^{-1}\mathbf r.
\end{equation}
Hence,
\begin{equation}\label{eq:g_mod_relation}
\mathbf g
\equiv
R\lambda\,\mathbf M^{-\top}\mathbf N^{-1}\mathbf r
\ \mathrm{mod}\ R\lambda\,\mathbf M^{-\top}.
\end{equation}

We next consider the general case where $\mathbf q$ is not necessarily integer-valued.
Then the detected 2D-DFT peak may not be exactly the discrete vector remainder of an integer vector modulo $\mathbf N$.
This  may cause an error in the detected vector remainder $\mathbf r$ and thus an error in $\mathbf q$. The error in $\mathbf q$ will propagate to $\mathbf g$ through the matrix factor $\mathbf M^{-\top}\mathbf N^{-1}$ as shown in \eqref{eq:q_def_generalL}. This motivates our choice of $\mathbf N$ below.
To see this, let
\begin{equation}\label{eq:q_continuous_remainder}
\mathbf q
=
\mathbf N\mathbf n+\mathbf r_{\mathrm c},
\qquad
\mathbf r_{\mathrm c}\in\mathcal P(\mathbf N),\ \mathbf n\in\mathbb Z^2,
\end{equation}
where $\mathbf r_{\mathrm c}$ is the vector remainder of the non-integer vector $\mathbf q$ modulo $\mathbf N$.
From \eqref{eq:q_def_generalL}, the corresponding vector remainder of $\mathbf g$ is
\begin{equation}\label{eq:g_remainder_continuous}
\mathbf g_{\mathrm{rem}}
=
R\lambda\,\mathbf M^{-\top}\mathbf N^{-1}\mathbf r_{\mathrm c}.
\end{equation}
Suppose that the detected vector remainder from the 2D-DFT peak detection can be modeled as
\begin{equation}
\hat{\mathbf r}
=
\mathbf r_{\mathrm c}+\mathbf e_r,
\end{equation}
where $\mathbf e_r$ denotes the vector remainder estimation error.
Then the induced error in the vector remainder of $\mathbf g$ is
\begin{equation}\label{eq:g_error_exact_generalN}
\hat{\mathbf g}_{\mathrm{rem}}
-
\mathbf g_{\mathrm{rem}}
=
R\lambda\,\mathbf M^{-\top}\mathbf N^{-1}\mathbf e_r,
\end{equation}
and therefore
\begin{equation}\label{eq:g_error_bound_inf_generalN}
\left\|
\hat{\mathbf g}_{\mathrm{rem}}
-
\mathbf g_{\mathrm{rem}}
\right\|_\infty
\le
R\lambda\,
\left\|
\mathbf M^{-\top}\mathbf N^{-1}
\right\|_\infty
\,
\|\mathbf e_r\|_\infty.
\end{equation}
A more detailed treatment of the non-integer case and the associated quantization effect will be given in Section~\ref{s4.2}.

The bound in~\eqref{eq:g_error_bound_inf_generalN} shows that, for a given array geometry matrix $\mathbf M$, the choice of $\mathbf N$ directly affects the error amplification from the vector remainder error $\mathbf e_r$ to the parameter vector $\mathbf g$.
This motivates us to select $\mathbf N$ so that
\[
\left\|
\mathbf M^{-\top}\mathbf N^{-1}
\right\|_\infty
\]
is as small as possible.

In the meantime, the choice of $\mathbf N$ is also constrained by the array size.
This is because, from \eqref{eq:mddft_def_generalL_rev}, the antenna index $\mathbf u$ is sampled over the whole FPD set $\mathcal N(\mathbf N^\top)$, whose cardinality is
\[
|\mathcal N(\mathbf N^\top)|
=
|\det(\mathbf N)|.
\]
Therefore, $|\det(\mathbf N)|$ equals the number of antenna elements used in the corresponding planar array.
In practice, once the array geometry $\mathbf M$ is fixed, the number of available antenna elements is limited by the physical size of the platform.
It is thus natural to optimize $\mathbf N$ under a fixed absolute value of the determinant, when the array geometry $\mathbf M$ in \eqref{array_geometry} is fixed.

This design issue is specific to the two-dimensional setting considered here.
For a one-dimensional linear array, the corresponding $\mathbf{N}$ reduces to a scalar $N$.
In that case, $N$ represents both the number of points in the one-dimensional DFT and the number of antenna elements in the array.
Hence, under a given antenna number constraint, the choice of $N$ is unique and one simply uses the largest admissible value, and no additional structural design freedom exists.
In contrast, in the present two-dimensional case, $\mathbf N$ is a matrix, and when its absolute determinant value is fixed, it has many choices with different matrix structures that lead to different error amplification behavior.

This motivates the following optimization problem:
\begin{equation}\label{eq:opt_problem_fixed_det_2d}
\begin{aligned}
\min_{\mathbf N\in\mathbb Z^{2\times2}}
\quad &
\left\|
\mathbf M^{-\top}\mathbf N^{-1}
\right\|_\infty
\\
\text{subject to}\quad &
|\det(\mathbf N)|=l,
\end{aligned}
\end{equation}
where $l$ is a given positive integer representing the number of available antenna elements.
For convenience, define
\begin{equation}\label{eq:W_def_2d}
\mathbf W(\mathbf N)
\triangleq
\mathbf M^{-\top}\mathbf N^{-1}.
\end{equation}
The following theorem characterizes the optimal solution to \eqref{eq:opt_problem_fixed_det_2d}.

\begin{thm}\label{thm:optimal_N_2d}
Let $l$ be a fixed positive integer.
For the optimization problem in~\eqref{eq:opt_problem_fixed_det_2d}, every feasible integer matrix $\mathbf N$ satisfies
\begin{equation}\label{eq:det_lower_bound_inf_2d}
\|\mathbf W(\mathbf N)\|_\infty
\ge
\frac{1}{\sqrt{l\,|\det(\mathbf M)|}}.
\end{equation}
Moreover, if
\begin{equation}\label{eq:N_star_def_2d}
\mathbf N_\star
=
\alpha\,\operatorname{adj}(\mathbf M^\top)
\end{equation}
is integer-valued for some scalar $\alpha$ and satisfies
\begin{equation}\label{eq:l_alpha_relation}
l=\alpha^2|\det(\mathbf M)|,
\end{equation}
then $\mathbf N_\star$ attains the equality in~\eqref{eq:det_lower_bound_inf_2d} and is therefore a global minimizer of~\eqref{eq:opt_problem_fixed_det_2d}.
\end{thm}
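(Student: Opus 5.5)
The lower bound will come from relating the induced $\infty$-norm of a $2\times 2$ matrix to its determinant. For any $\mathbf A=[A_{ij}]\in\mathbb R^{2\times2}$, the Leibniz expansion gives
\[
|\det(\mathbf A)|=|A_{11}A_{22}-A_{12}A_{21}|\le |A_{11}||A_{22}|+|A_{12}||A_{21}|.
\]
Write $a_i=|A_{i1}|+|A_{i2}|\le\|\mathbf A\|_\infty$ for the absolute row sums. Then
\[
(|A_{11}|+|A_{12}|)(|A_{21}|+|A_{22}|)=|A_{11}||A_{22}|+|A_{12}||A_{21}|+|A_{11}||A_{21}|+|A_{12}||A_{22}|\ge|\det(\mathbf A)|.
\]
Hence $\|\mathbf A\|_\infty^2\ge a_1a_2\ge|\det(\mathbf A)|$. (Equivalently, one can use $\rho(\mathbf A)^2\ge|\det\mathbf A|$ together with $\|\mathbf A\|_\infty\ge\rho(\mathbf A)$; the row-sum argument avoids any eigenvalue discussion.)

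Apply this to $\mathbf A=\mathbf W(\mathbf N)=\mathbf M^{-\top}\mathbf N^{-1}$. By multiplicativity of the determinant, $|\det\mathbf W(\mathbf N)|=1/(|\det\mathbf M|\,|\det\mathbf N|)=1/(l|\det\mathbf M|)$ for every feasible $\mathbf N$. Taking square roots gives \eqref{eq:det_lower_bound_inf_2d}. Integrality of $\mathbf N$ plays no role in this part; only the determinant constraint is used.

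For the attainment part, use $\operatorname{adj}(\mathbf M^\top)=\det(\mathbf M)\,\mathbf M^{-\top}$. Then $\mathbf N_\star=\alpha\det(\mathbf M)\mathbf M^{-\top}$, so $\mathbf N_\star^{-1}=\mathbf M^\top/(\alpha\det\mathbf M)$. It follows that
\[
\mathbf W(\mathbf N_\star)=\mathbf M^{-\top}\mathbf N_\star^{-1}=\frac{1}{\alpha\det(\mathbf M)}\mathbf I,
\]
whose $\infty$-norm is $1/(|\alpha||\det\mathbf M|)$. Next check feasibility: $|\det\mathbf N_\star|=\alpha^2|\det\mathbf M|^2\,|\det\mathbf M^{-\top}|=\alpha^2|\det\mathbf M|=l$ by \eqref{eq:l_alpha_relation}, and $\mathbf N_\star$ is integer by hypothesis. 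Finally, \eqref{eq:l_alpha_relation} gives $|\alpha||\det\mathbf M|=\sqrt{l|\det\mathbf M|}$, so the norm of $\mathbf W(\mathbf N_\star)$ equals the lower bound and $\mathbf N_\star$ is a global minimizer.

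The only nontrivial step is the inequality $\|\mathbf A\|_\infty^2\ge|\det\mathbf A|$. The row-sum product argument above handles it cleanly, and the rest is bookkeeping with adjugate identities.
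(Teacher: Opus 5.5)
Your proof is correct and follows essentially the same route as the paper. Both arguments bound $|\det\mathbf W|$ by the product of row norms and then by $\|\mathbf W\|_\infty^2$. The paper gets the row-product bound from Hadamard's inequality followed by $\|\cdot\|_2\le\|\cdot\|_1$, whereas you prove the needed $\ell_1$ version directly from the $2\times2$ Leibniz expansion; your argument also extends to dimension $D$, since expanding $\prod_i\sum_j|A_{ij}|$ produces every permutation term. The attainment part, via $\operatorname{adj}(\mathbf M^\top)=\det(\mathbf M)\mathbf M^{-\top}$ and $\mathbf W(\mathbf N_\star)=\mathbf I/(\alpha\det\mathbf M)$, is identical to the paper's.
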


\begin{proof}
Let $\mathbf N$ be any feasible integer matrix and write
\[
\mathbf W=\mathbf W(\mathbf N)=\mathbf M^{-\top}\mathbf N^{-1}.
\]
By Hadamard's determinant inequality~\cite{horn2012matrix} applied to the rows of $\mathbf W$, we have
\[
|\det(\mathbf W)|
\le
\prod_{i=1}^2 \|\operatorname{row}_i(\mathbf W)\|_2.
\]
Moreover, for each row vector,
\[
\|\operatorname{row}_i(\mathbf W)\|_2
\le
\|\operatorname{row}_i(\mathbf W)\|_1
\le
\|\mathbf W\|_\infty,
\]
where $\|\cdot\|_2$ and $\|\cdot\|_1$ denote the Euclidean norm and the $\ell_1$ norm of a vector, respectively. 
Therefore, we obtain
\begin{equation}\label{eq:W_det_bound}
\|\mathbf W\|_\infty
\ge
|\det(\mathbf W)|^{1/2}.
\end{equation}
Since
\[
\det(\mathbf W)
=
\det(\mathbf M^{-\top})\det(\mathbf N^{-1})
=
\frac{1}{\det(\mathbf M)\det(\mathbf N)},
\]
and every feasible $\mathbf N$ satisfies $|\det(\mathbf N)|=l$, we obtain
\begin{equation}\label{eq:W_det_exact}
|\det(\mathbf W)|
=
\frac{1}{l\,|\det(\mathbf M)|}.
\end{equation}
Combining \eqref{eq:W_det_bound} and \eqref{eq:W_det_exact}, we have
\[
\|\mathbf W(\mathbf N)\|_\infty
\ge
\frac{1}{\sqrt{l\,|\det(\mathbf M)|}},
\]
which proves~\eqref{eq:det_lower_bound_inf_2d}.

Now consider $\mathbf N_\star=\alpha\,\operatorname{adj}(\mathbf M^\top)$.
Using
\[
\operatorname{adj}(\mathbf M^\top)=\det(\mathbf M)\mathbf M^{-\top},
\]
we have
\[
\mathbf N_\star
=
\alpha\,\det(\mathbf M)\,\mathbf M^{-\top},
\]
and hence
\[
\mathbf W(\mathbf N_\star)
=
\mathbf M^{-\top}\mathbf N_\star^{-1}
=
\frac{1}{\alpha\,\det(\mathbf M)}\mathbf I.
\]
Therefore,
\begin{equation}\label{eq:W_star_norm}
\|\mathbf W(\mathbf N_\star)\|_\infty
=
\frac{1}{|\alpha|\,|\det(\mathbf M)|}.
\end{equation}
On the other hand,
\begin{equation}\label{eq:N_star_det}
|\det(\mathbf N_\star)|
=
\alpha^2\,|\det(\operatorname{adj}(\mathbf M^\top))|
=
\alpha^2\,|\det(\mathbf M)|.
\end{equation}
If~\eqref{eq:l_alpha_relation} holds, then \eqref{eq:N_star_det} gives
\[
|\det(\mathbf N_\star)|=l,
\]
so $\mathbf N_\star$ is feasible. Moreover, by~\eqref{eq:l_alpha_relation},
\[
\frac{1}{\sqrt{l\,|\det(\mathbf M)|}}
=
\frac{1}{\sqrt{\alpha^2|\det(\mathbf M)|^2}}
=
\frac{1}{|\alpha|\,|\det(\mathbf M)|}.
\]
Combining this with~\eqref{eq:W_star_norm}, we see that $\mathbf N_\star$ attains the lower bound in~\eqref{eq:det_lower_bound_inf_2d}. Therefore, $\mathbf N_\star$ is a global minimizer of~\eqref{eq:opt_problem_fixed_det_2d}.
\end{proof}

\begin{remark}
Although Theorem~\ref{thm:optimal_N_2d} is stated for the two-dimensional case considered in this paper, the same argument extends directly to an arbitrary dimension $D$.
\end{remark}

Theorem~\ref{thm:optimal_N_2d} shows that, for a given $\mathbf M$, the choice
\begin{equation}\label{eq:N_choice_final_2d}
\mathbf N
=
\alpha\,\operatorname{adj}(\mathbf M^\top)
\end{equation}
minimizes the error amplification factor
\[
\left\|
\mathbf M^{-\top}\mathbf N^{-1}
\right\|_\infty
\]
among all integer matrices with the same absolute determinant value, whenever $\mathbf N$ is integer-valued.

This result also has a simple intuitive explanation.
Since
\[
\operatorname{adj}(\mathbf M^\top)=\det(\mathbf M)\mathbf M^{-\top},
\]
choosing $\mathbf N$ proportional to $\operatorname{adj}(\mathbf M^\top)$ makes $\mathbf M^{-\top}\mathbf N^{-1}$
a scalar multiple of the identity matrix.
So, after this choice, the effect of the array geometry is the same in both coordinates.
In other words, the vector remainder error is balanced in the two directions, instead of being enlarged more in one direction than in the other.

The scalar $\alpha$ sets the scale of $\mathbf N$ in \eqref{eq:N_choice_final_2d}.
Since from \eqref{eq:mddft_def_generalL_rev}, one can see that the antenna indices need to be chosen from the whole set $\mathcal N(\mathbf N^\top)$, the number of antenna elements is $|\det(\mathbf N)|$.
Therefore, when the array geometry $\mathbf M$ is fixed, a larger feasible $|\alpha|$ means that more antenna elements need to be used.
Meanwhile, it also reduces the error amplification factor $\left\|
\mathbf M^{-\top}\mathbf N^{-1}
\right\|_\infty$.
So $\alpha$ can be viewed as a scaling factor that makes full use of the available antennas and also improves robustness.
On the other hand, all the antennas with indices in $\mathcal N(\mathbf N^{\top})$ need to be within a given platform of limited size. This means that when the array geometry $\mathbf M$ is fixed, we may use the scaling factor $\alpha$ to control the array occupied platform size.

Therefore, in the following, we adopt the choice in~\eqref{eq:N_choice_final_2d} and conduct all subsequent analysis under this setting.

Define
\begin{equation}\label{eq:Nandf}
\mathbf N
\triangleq
\alpha\,\operatorname{adj}(\mathbf M^{\top}),
\qquad
\mathbf f
\triangleq
\frac{\alpha\det(\mathbf M)}{R\lambda}\,\mathbf g,
\end{equation}
where $\alpha \in \mathbb{R}$ is chosen such that $\mathbf N$ is an integer matrix.
In general, $\mathbf f$ is not necessarily an integer vector.
In this subsection, we first consider the ideal case where $\mathbf f$ is also an integer vector.
The general non-integer case of $\mathbf f$, which introduces quantization errors, will be addressed in Section~\ref{s4.2}.

Since
\[
\operatorname{adj}(\mathbf M)\mathbf M
=
\det(\mathbf M)\mathbf I,
\]
we have
\begin{equation}\label{eq:mgandfn}
\frac{1}{R\lambda}\mathbf M^\top\mathbf g
=
\mathbf N^{-1}\mathbf f.
\end{equation}
Substituting \eqref{eq:mgandfn} into~\eqref{eq:Su_exp_structure}, we obtain
\begin{equation}\label{eq:Su_final}
S_{\mathbf u}(n)
=
C_0\,
\exp\!\left(
j2\pi\,\mathbf f^{\top}\mathbf N^{-\top}\mathbf u
\right)
\delta\!\big(n-n_{x_0}-\Delta_{\rm shift}\big).
\end{equation}

Applying the 2D-DFT with respect to $\mathbf u\in\mathcal N(\mathbf N^\top)$, and using the same derivation as in~\eqref{eq:mddft_def_generalL_rev}--\eqref{eq:2ddft_result_integer_case}, we obtain
\begin{equation}\label{eq:MDDFT_main}
\tilde S(\mathbf k,n)
=
C_0\,|\det(\mathbf N)|
\,\delta\!\big(\mathbf k-\mathbf r\big)
\,\delta\!\big(n-n_{x_0}-\Delta_{\rm shift}\big),
\end{equation}
where $\mathbf r\in\mathcal N(\mathbf N)$ is the vector remainder of $\mathbf f$ modulo $\mathbf N$, namely,
\begin{equation}\label{eq:f_congruence_main}
\mathbf f
\equiv
\mathbf r
\ \mathrm{mod}\ \mathbf N.
\end{equation}

Therefore, the 2D-DFT directly provides the vector remainder $\mathbf r$ of $\mathbf f$ modulo $\mathbf N$.
If $\mathbf f\in\mathcal N(\mathbf N)$, then $\mathbf r=\mathbf f$, and the parameter vector $\mathbf g$ can be uniquely recovered from~\eqref{eq:Nandf}.
Otherwise, only a folded version of $\mathbf f$, namely its vector remainder modulo $\mathbf N$, is obtained.
This ambiguity motivates the multi-subarray framework developed in the next section.

\subsection{Planar Array Structure}\label{subsec:array}

To apply the above 2D-DFT formulation in practice, two conditions are needed.
First, the matrix $\mathbf N$ must be an integer matrix.
This ensures that the complex exponential terms
$\exp\!\left(
j2\pi\,\mathbf f^{\top}\mathbf N^{-\top}\mathbf u
\right)$
are periodic with respect to the lattice generated by $\mathbf N^{\top}$.
Second, the antenna indices must fully cover the FPD of $\mathbf N^{\top}$.
That is, for each $\mathbf u\in\mathcal N(\mathbf N^\top)$, there must be an antenna element at the location $\mathbf M\mathbf u$.

We next give an example to illustrate how the antenna elements should be arranged so that these conditions are satisfied.

\begin{example}\label{ex:array}
Let
\begin{equation}\notag
\mathbf M
=
\begin{pmatrix}
1.5 & 0\\
-0.5 & 1
\end{pmatrix},
\end{equation}
and all antenna elements are placed on the lattice
generated by $\mathbf M$, i.e.,
each antenna location is of the form
$\mathbf M\mathbf u$ with $\mathbf u\in\mathbb Z^2$.

\begin{figure}[htbp]
  \centering
  \includegraphics[width=0.8\linewidth]{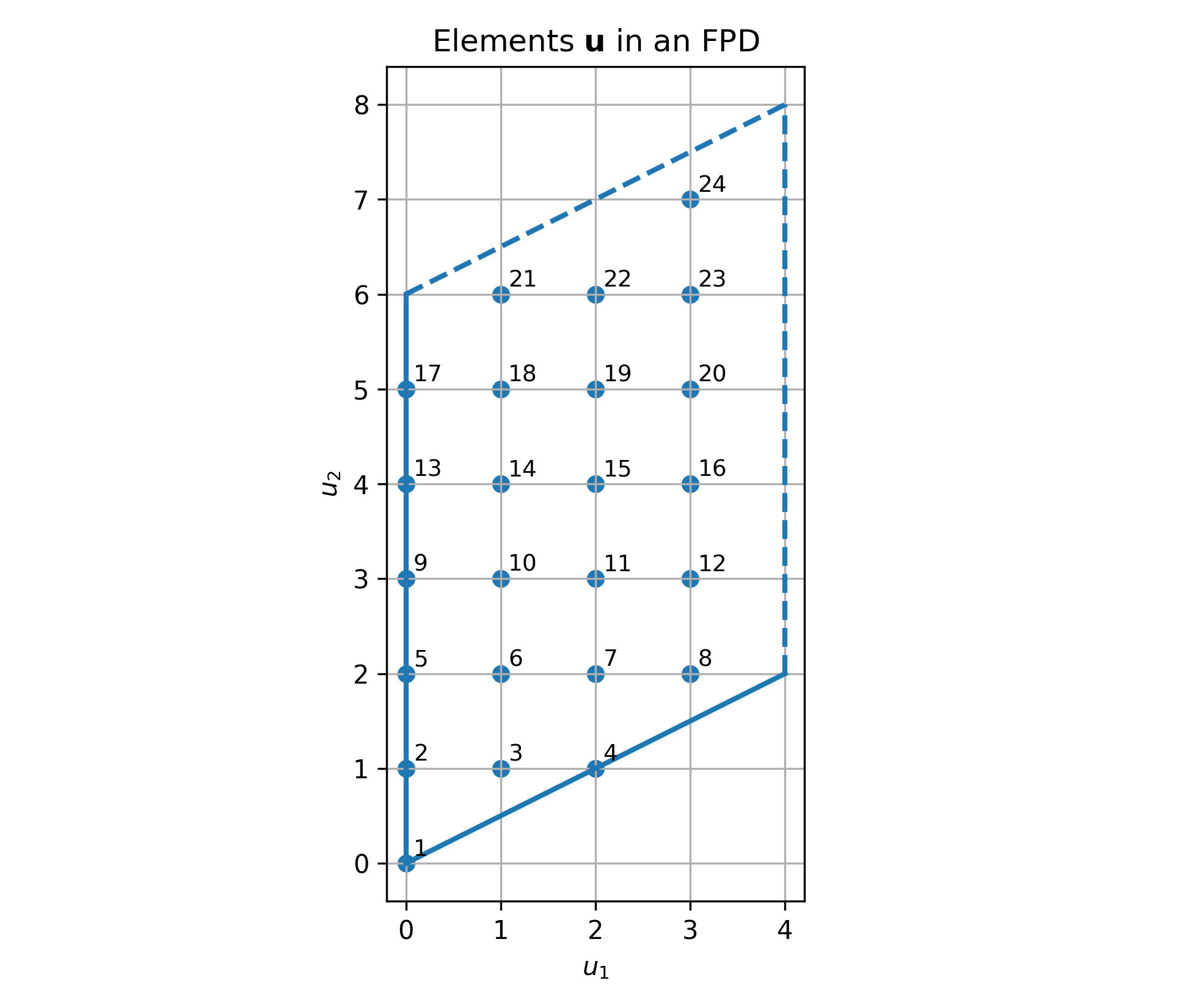}
  \caption{FPD of $\mathbf N^{\top}$.}
  \label{fig:ex:fpd}
\end{figure}

\begin{figure}[htbp]
  \centering
  \includegraphics[width=0.8\linewidth]{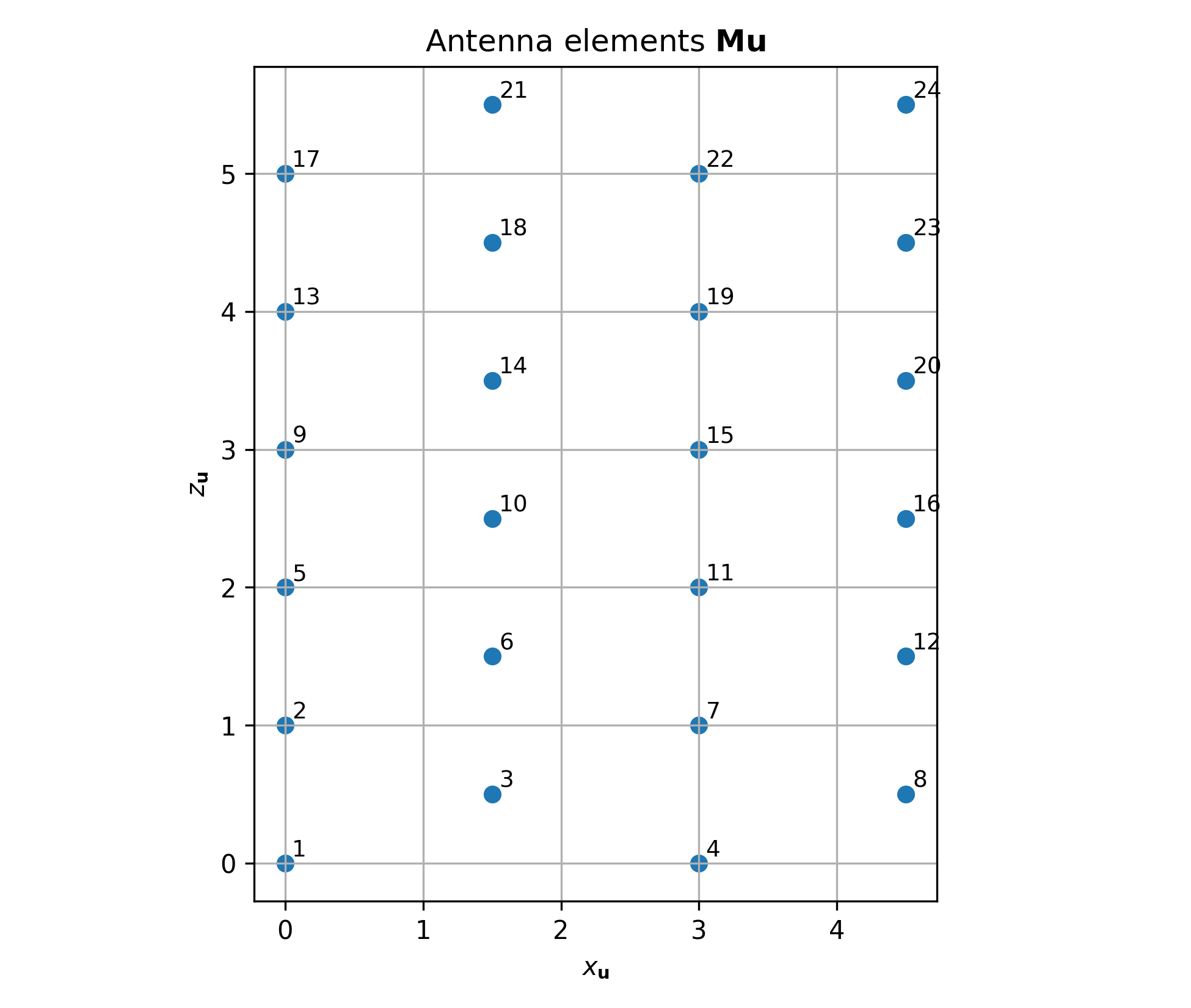}
  \caption{Planar Antenna Array of Example \ref{ex:array}.}
  \label{fig:ex:array}
\end{figure}

Choose the scaling factor $\alpha = 4$.
Then
\begin{equation}\notag
\mathbf N
=
\alpha\,\operatorname{adj}(\mathbf M^{\top})
=
\begin{pmatrix}
4 & 2\\
0 & 6
\end{pmatrix}.
\end{equation}
The FPD of
$\mathbf N^{\top}$ is illustrated in Fig.~\ref{fig:ex:fpd}.
All integer index vectors $\mathbf u$ inside this region are marked in the figure.
In this example, the region contains $|\det(\mathbf N^{\top})|=24$ vectors.

For each index vector $\mathbf u$ in $\mathcal N(\mathbf N^{\top})$
there exists a corresponding antenna location
$\mathbf M\mathbf u$.
The positions of all 24 antenna elements are shown in Fig.~\ref{fig:ex:array}.
\end{example}

This example illustrates how the antenna elements should be arranged so that the antenna indices exactly cover one FPD of $\mathbf N^\top$, and therefore satisfy the requirements of the above 2D-DFT formulation.

\section{Ambiguity Analysis and Robust Recovery}\label{s4}

In this section, we address the ambiguity caused by the vector remainder structure derived in Section~\ref{s3}. We first show that, for a single planar array, the unambiguous range is fundamentally limited, so ambiguity arises once the true parameter vector lies outside this range. We then develop a multi-subarray framework that combines vector remainders with different matrix moduli through the MD-CRT to recover the true parameter vector over an enlarged unambiguous range. Finally, we analyze the effects of practical errors, including quantization and noise, and establish sufficient conditions together with explicit reconstruction error bounds for robust recovery.

\subsection{Multi-Subarray Ambiguity Resolution}
From \eqref{eq:MDDFT_main}, the 2D-DFT yields only the vector remainder $\mathbf r$ of $\mathbf f$ modulo $\mathbf N$, rather than $\mathbf f$ itself. In this subsection, we continue to consider the ideal case where $\mathbf f$ in \eqref{eq:Nandf} is an integer vector. Then there exists an integer vector $\mathbf n\in\mathbb Z^2$ such that
\begin{equation}\label{eq:f_ambiguity}
\mathbf f = \mathbf N \mathbf n + \mathbf r, \quad \mathbf r \in \mathcal N(\mathbf N).
\end{equation}
Hence, a single planar array recovers the exact vector $\mathbf f$ only when $\mathbf f \in \mathcal N(\mathbf N)$. Otherwise, only its folded version $\mathbf r$ is observed. Since $\mathbf f$ depends directly on the motion-induced shift and the target height, when the target velocity or height is large, $\mathbf f$ may fall outside $\mathcal N(\mathbf N)$, and ambiguity then occurs.

To deal with this problem, we employ multiple planar subarrays with different lattice generators similar to the one-dimensional non-uniform linear array framework in~\cite{gangli}.
The key idea is that different subarrays are constructed with different lattice generators, which in turn induce different matrix moduli and hence generate distinct folded observations of $\mathbf g$.
As a result, these folded observations from different subarrays form a system of matrix congruences, which can be jointly solved by the MD-CRT to enlarge the unambiguous range.

Let $\mathbf M_j$, $j=1,2,\dots,J$, be $2\times 2$ lattice generator matrices.
For each $\mathbf M_j$, define the corresponding 2D-DFT matrix modulus
\begin{equation}
\mathbf N_j
\triangleq
\alpha_j \,\operatorname{adj}(\mathbf M_j^{\top}),
\qquad j=1,\dots,J,
\label{eq:Nj_def}
\end{equation}
where $\alpha_j$ is chosen such that $\mathbf N_j$ is an integer matrix for each $j$.
The $j$-th planar subarray places its antenna elements at
$\mathbf M_j \mathbf u$, $\mathbf u \in \mathcal N(\mathbf N_j^{\top})$,
so that the antenna indices cover one whole FPD of $\mathbf N_j^{\top}$.
In this way, we obtain $J$ subarrays, each associated with a different matrix modulus $\mathbf N_j$.

The complex image obtained from the $j$-th subarray is
\[
S_{\mathbf u}^{(j)}(n)
=
C_0^{(j)}
\exp\!\left(
j2\pi\,\mathbf f_j^{\top}
\mathbf N_j^{-\top}\mathbf u
\right)
\delta\!\big(n-n_{x_0}-\Delta_{\rm shift}\big),
\]
where
\[
\mathbf f_j
=
\frac{\alpha_j \det(\mathbf M_j)}{R\lambda}\,\mathbf g,
\]
which, for description convenience, is also assumed to be an integer vector now and the general case will be studied later. By applying the MD-DFT with respect to
$\mathbf u \in \mathcal N(\mathbf N_j^{\top})$ as in
\eqref{eq:Su_final}--\eqref{eq:f_congruence_main},
we obtain
\[
\mathbf f_j
\equiv
\mathbf r_j
\quad
\mathrm{mod}\ \mathbf N_j.
\]
Substituting the expression of $\mathbf f_j$ into the above congruence gives
\begin{equation}
\mathbf g
\equiv
\frac{R\lambda}{\alpha_j \det(\mathbf M_j)}\,\mathbf r_j
\quad \mathrm{mod}\ 
\frac{R\lambda}{\alpha_j \det(\mathbf M_j)}\,\mathbf N_j,
\label{eq:g_multi_arrays}
\end{equation}
for $j=1,2,\cdots,J$.

For notational convenience, define
\begin{equation}
\beta_j
\triangleq
\frac{R\lambda}{\alpha_j \det(\mathbf M_j)},
\qquad j=1,\dots,J.
\label{eq:Rj_def}
\end{equation}
Then \eqref{eq:g_multi_arrays} can be written compactly as
\begin{equation}
\mathbf g
\equiv
\beta_j\,\mathbf r_j
\quad
\mathrm{mod}\ \beta_j\,\mathbf N_j,
\qquad j=1,\dots,J.
\label{eq:g_multi_compact}
\end{equation}
To apply the MD-CRT, the congruence system in \eqref{eq:g_multi_compact} must be represented in integer-valued form. This can be handled in practice by appropriate scaling or quantization. For simplicity, we assume in this subsection that \eqref{eq:g_multi_compact} already satisfies the integer-valued requirement of the MD-CRT. Under this assumption, the resulting integer-valued congruence system can be solved using the MD-CRT.

\begin{prop}[MD-CRT~\cite{MD1}]\label{prop:MD-CRT}
Let $\mathbf{R}_1,\mathbf{R}_2,\ldots,\mathbf{R}_J$ be nonsingular integer matrices of size $D\times D$, and let $\mathbf{R}$ be any lcrm of them.
For any integer vector $\mathbf{g}\in\mathbb{Z}^{D}$, it can be uniquely determined from its vector remainders $\mathbf{g}\equiv\mathbf{r}_j\bmod\mathbf{R}_j$, $1\le j\le J$, if and only if $\mathbf{g}\in\mathcal{N}(\mathbf{R})$.
\end{prop}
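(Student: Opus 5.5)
The plan is to reduce the statement to two facts about the lattice $\mathcal L(\mathbf R)$: it equals the intersection $\bigcap_{j=1}^J \mathcal L(\mathbf R_j)$, and $\mathcal N(\mathbf R)$ is a complete, irredundant set of coset representatives of $\mathbb Z^D/\mathcal L(\mathbf R)$. Once these hold, the remainder map is injective exactly on $\mathcal N(\mathbf R)$.

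\textbf{Step 1: the lattice identity.} I will show $\mathcal L(\mathbf R)=\bigcap_j\mathcal L(\mathbf R_j)$.
\begin{itemize}[leftmargin=1.3em]
\item The inclusion $\subseteq$ is immediate. Since $\mathbf R=\mathbf R_j\mathbf P_j$ with $\mathbf P_j$ integer, we have $\mathbf R\mathbf n=\mathbf R_j(\mathbf P_j\mathbf n)$.
\item For $\supseteq$, let $\Lambda=\bigcap_j\mathcal L(\mathbf R_j)$. It is a subgroup of $\mathbb Z^D$ and contains $\mathcal L(\prod_j|\det\mathbf R_j|\,\mathbf I)$, because $|\det\mathbf R_j|\mathbf I=\mathbf R_j\operatorname{adj}(\mathbf R_j)$ up to sign. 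Hence $\Lambda$ has full rank, and $\Lambda=\mathcal L(\mathbf C)$ for some nonsingular integer $\mathbf C$.
\item Each $\mathbf R_j^{-1}\mathbf C$ is integer, since every column of $\mathbf C$ lies in $\mathcal L(\mathbf R_j)$. So $\mathbf C$ is a crm, and by the definition of lcrm, $\mathbf C=\mathbf R\mathbf Q$ for an integer $\mathbf Q$. This gives $\Lambda\subseteq\mathcal L(\mathbf R)$.
\end{itemize}

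\textbf{Step 2: the remainders determine the coset.} Integer vectors $\mathbf g,\mathbf g'$ have the same remainders for all $j$ iff $\mathbf g-\mathbf g'\in\mathcal L(\mathbf R_j)$ for all $j$. This uses uniqueness of the remainder decomposition from Section~\ref{s2}. By Step 1, this holds iff $\mathbf g-\mathbf g'\in\mathcal L(\mathbf R)$. So the remainder tuple determines exactly the coset $\mathbf g+\mathcal L(\mathbf R)$, and nothing finer.

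\textbf{Step 3: each coset meets $\mathcal N(\mathbf R)$ exactly once.} This follows from the vector remainder representation modulo $\mathbf R$: every $\mathbf g\in\mathbb Z^D$ is $\mathbf R\mathbf n+\mathbf r$ with a unique $\mathbf r\in\mathcal N(\mathbf R)$.

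\textbf{Sufficiency.} Suppose $\mathbf g\in\mathcal N(\mathbf R)$ is known a priori to lie there. Then $\mathbf g$ is the unique element of $\mathcal N(\mathbf R)$ in the coset determined by the remainders, so it is uniquely determined.

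\textbf{Necessity.} "Uniquely determined" is read as injectivity of the remainder map on the admissible range containing $\mathcal N(\mathbf R)$. Any range strictly larger than $\mathcal N(\mathbf R)$, or not contained in it, contains two distinct vectors in the same coset. This is because $\mathcal N(\mathbf R)$ already exhausts all $|\det\mathbf R|$ cosets. Such a pair has identical remainders, so $\mathbf g$ cannot be determined.

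The FPD is determined only up to the choice of lcrm, since a unimodular right factor changes $\mathcal N(\mathbf R)$. The argument does not depend on which lcrm is chosen, because each is a set of coset representatives of the same lattice. I will note this explicitly.

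\textbf{Main obstacle.} I expect Step 1, specifically the existence of the lcrm and the full-rank lattice basis argument. The rest is bookkeeping with cosets.
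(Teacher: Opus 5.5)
The paper does not prove this proposition; it is quoted from \cite{MD1} and used as a black box, so there is no in-paper argument to follow or deviate from. Your proof is correct and self-contained, and it is essentially the standard lattice argument behind the MD-CRT. The steps are: characterize the lcrm by $\mathcal L(\mathbf R)=\bigcap_j\mathcal L(\mathbf R_j)$; note that the remainder tuple determines exactly the coset $\mathbf g+\mathcal L(\mathbf R)$; and use that $\mathcal N(\mathbf R)$ meets each coset exactly once.

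Step~1 is sound. $\Lambda$ is a full-rank subgroup of $\mathbb Z^D$, it has a basis matrix $\mathbf C$, this $\mathbf C$ is a crm, and hence $\mathbf C=\mathbf R\mathbf Q$. This also gives you $\det\mathbf R\neq 0$, which you need for $\mathcal N(\mathbf R)$ to be defined, so say so explicitly. It further shows that $\mathbf C$ is itself an lcrm. Existence is therefore not the obstacle you anticipate, and the statement posits $\mathbf R$ anyway.

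Compared with relying on the citation, your route gives a transparent uniqueness proof and a clear reading of the loosely worded ``only if''. As you observe, the ``only if'' cannot mean that $\mathcal N(\mathbf R)$ is the \emph{only} admissible range. Any complete residue system of $\mathcal L(\mathbf R)$ works equally well, for example $\mathcal N(\mathbf R\mathbf U)$ for unimodular $\mathbf U$ or a translate of $\mathcal N(\mathbf R)$. For the same reason, the phrase ``or not contained in it'' in your necessity step is only harmless because you assume the range contains $\mathcal N(\mathbf R)$. Stated on its own it would be false.

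A cleaner necessity statement is the pigeonhole version. There are exactly $|\det\mathbf R|$ cosets, so any set on which the remainder map is injective has at most $|\det\mathbf R|$ elements, and $\mathcal N(\mathbf R)$ attains this maximum. Your argument does not give what the cited MD-CRT machinery provides, namely a way to actually compute $\mathbf g$ from $\{\mathbf r_j\}$. The paper relies on that algorithmic side through the robust version in Proposition~\ref{prop:robust_mdcrt_specialized}; your proof establishes uniqueness only.
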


Under the above integer-valued assumption, applying Proposition~\ref{prop:MD-CRT} to the system in \eqref{eq:g_multi_compact}, the parameter vector $\mathbf g$ can be uniquely recovered within the FPD of
\[
\mathbf R \triangleq \operatorname{lcrm}\!\big(\beta_1\mathbf N_1,\dots,\beta_J\mathbf N_J\big).
\]
Therefore, the FPD $\mathcal N(\mathbf R)$ defines an \textit{unambiguous range} of the multi-subarray system. Since the lcrm is not unique in general, the corresponding unambiguous range is not unique either. However, the absolute determinant value $|\det(\mathbf R)|$ is uniquely determined by $\beta_1\mathbf N_1,\dots,\beta_J\mathbf N_J$ for any lcrm of them. Since $|\mathcal N(\mathbf R)|=|\det(\mathbf R)|$, the size of an unambiguous range, i.e., the number of uniquely recoverable unknown integer vectors $\mathbf{g}$, is uniquely determined, although its shape may depend on the particular choice of the lcrm. In practice, one usually fixes a particular lcrm in advance and uses the corresponding FPD as the unambiguous range.

If there does not exist a matrix among $\beta_1\mathbf N_1,\dots,\beta_J\mathbf N_J$ that is a right multiple of all the others, then the FPD of the lcrm $\mathbf R$ contains strictly more points than that of any individual matrix modulus.
That is,
\[
|\det(\mathbf R)| > |\det(\beta_j\mathbf N_j)|,
\qquad j=1,\dots,J.
\]
Therefore, the size of an unambiguous range achieved by multiple subarrays is strictly larger than that provided by any single subarray.

To illustrate this enlargement geometrically, we consider the following example.

\begin{example}\label{ex:lcrm_fpd}
Consider two matrices
\begin{equation}\notag
\mathbf R_1
=
\begin{pmatrix}
2 & 1\\
0 & 2
\end{pmatrix},
\qquad
\mathbf R_2
=
\begin{pmatrix}
2 & 0\\
1 & 2
\end{pmatrix}.
\end{equation}
It can be verified that their least common right multiple is \cite{guo25}
\begin{equation}\notag
\mathbf R
=
\operatorname{lcrm}(\mathbf R_1,\mathbf R_2)
=
\begin{pmatrix}
4 & 0\\
0 & 4
\end{pmatrix}.
\end{equation}
The corresponding FPDs are illustrated in Fig.~\ref{fig:ex:2}. As shown in the figure, the FPD of $\mathbf R$ strictly contains those of $\mathbf R_1$ and $\mathbf R_2$. Therefore, the unambiguous range achieved by the two planar subarrays generated by $\mathbf{R}_1$ and $\mathbf{R}_2$ is strictly larger than that of any single subarray alone.

\begin{figure}[htbp]
  \centering
  \includegraphics[width=0.7\linewidth]{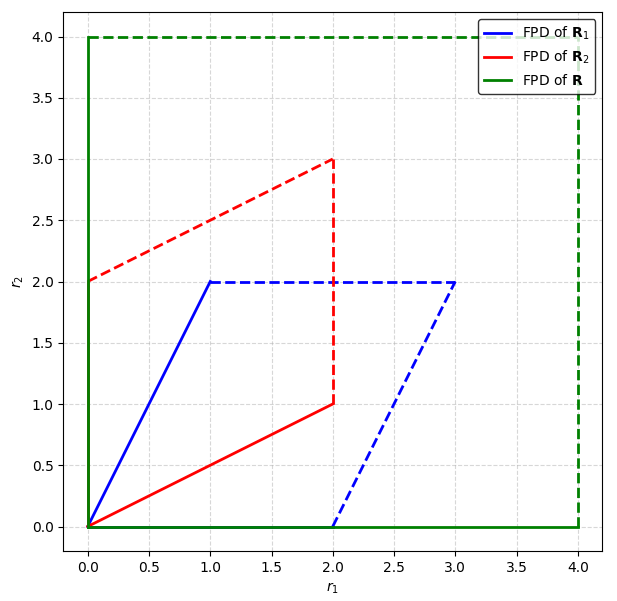}
  \caption{The FPDs of Example \ref{ex:lcrm_fpd}.}
  \label{fig:ex:2}
\end{figure}
\end{example}

This example shows geometrically how the lcrm can generate an FPD that is strictly larger than any associated with an individual matrix modulus, and therefore how multiple subarrays enlarge the unambiguous range.

\subsection{Approximate 2D-DFT Peak Model for Non-Integer Frequency Vectors}\label{s4.2}

In general, the frequency vector $\mathbf f_j$ is not necessarily integer-valued. 
This makes it harder to interpret the detected 2D-DFT peak. 
When $\mathbf f_j\in\mathbb Z^2$, the detected peak can be identified directly with the standard vector remainder of $\mathbf f_j$ modulo $\mathbf N_j$. 
However, when $\mathbf f_j\notin\mathbb Z^2$, this interpretation may not be exact. 
Specifically, let
\[
\mathbf N_j=\mathbf U_j\mathbf D_j\mathbf V_j
\]
be a Smith normal form~\cite{matrix} of $\mathbf N_j$, where $\mathbf U_j,\mathbf V_j$ are unimodular and $\mathbf D_j$ is diagonal. Then the exact peak selection rule is governed by the transformed frequency vector
\begin{equation}\label{eq:transformed_f_j}
\widetilde{\mathbf f}_j=\mathbf U_j^{-1}\mathbf f_j
\end{equation}
on the diagonal grid induced by $\mathbf D_j$. 
Equivalently, in the original coordinates, the exact decision regions are generally skewed and depend on the unimodular matrix $\mathbf U_j$. 
As a result, a direct closed-form characterization of the exact peak is difficult, and the subsequent error analysis would have to track these transformed decision boundaries explicitly.

For this reason, we adopt a tractable approximation that replaces the non-integer vector $\mathbf f_j$ by a nearby integer vector. 
Define the componentwise nearest-integer operator $[\cdot]:\mathbb R^2\to\mathbb Z^2$ by
\begin{equation}\label{eq:nearest_integer_operator}
[\mathbf a]
\triangleq
\big[[a_1]_{\mathbb Z},[a_2]_{\mathbb Z}\big]^\top,
\qquad
[a_i]_{\mathbb Z}\in\arg\min_{n\in\mathbb Z}|a_i-n|,
\end{equation}
for $\mathbf a=[a_1,a_2]^\top\in\mathbb R^2$. 
When $a_i\notin \mathbb Z+\tfrac12$, the minimizer is unique. When $a_i\in \mathbb Z+\tfrac12$, either of the two nearest integers may be chosen arbitrarily.
We then define the integer approximation of $\mathbf f_j$ as
\begin{equation}\label{eq:f_int_definition_paper}
\mathbf f_j^{\mathrm{int}}
\triangleq
[\mathbf f_j],
\end{equation}
and the associated quantization error as
\begin{equation}\label{eq:quantization_error_definition_paper}
\mathbf e_j
\triangleq
\mathbf f_j^{\mathrm{int}}-\mathbf f_j.
\end{equation}
Then,
\begin{equation}\label{eq:quantization_error_bound_paper}
\|\mathbf e_j\|_\infty \leq \frac12.
\end{equation}
Using this integer approximation, we approximate the detected 2D-DFT peak by the standard vector remainder of $\mathbf f_j^{\mathrm{int}}$ modulo $\mathbf N_j$. 
That is, we use the vector remainder of $\mathbf f_j^{\mathrm{int}}$ modulo $\mathbf N_j$ as an approximation of the detected peak. 
This converts the problem with non-integer frequency vectors into the same integer vector remainder framework used in the previous subsection, so that the subsequent MD-CRT reconstruction analysis can still be applied. 

This approximation is reasonable for the following reason. 
Substituting \eqref{eq:quantization_error_definition_paper} into \eqref{eq:transformed_f_j}, we have
\[
\widetilde{\mathbf f}_j
=
\mathbf U_j^{-1}\mathbf f_j
=
\mathbf U_j^{-1}\mathbf f_j^{\mathrm{int}}
-
\mathbf U_j^{-1}\mathbf e_j.
\]
Since $\mathbf U_j^{-1}\mathbf f_j^{\mathrm{int}}\in\mathbb Z^2$, the transformed vector $\widetilde{\mathbf f}_j$ is obtained by adding the error vector $-\mathbf U_j^{-1}\mathbf e_j$ to an integer lattice point. 
If 
\begin{equation}\label{eq:no_boundary_crossing_condition_paper}
\|\mathbf U_j^{-1}\mathbf e_j\|_\infty \leq \frac12,
\end{equation}
then $\mathbf U_j^{-1}\mathbf f_j^{\mathrm{int}}$ is the nearest integer vector to $\widetilde{\mathbf f}_j$. 
Therefore, in the transformed coordinates, the detected 2D-DFT peak corresponds exactly to the integer grid point $\mathbf U_j^{-1}\mathbf f_j^{\mathrm{int}}$. Equivalently, in the original coordinates, this means that the detected peak can be represented by the vector remainder of $\mathbf f_j^{\mathrm{int}}$ modulo $\mathbf N_j$.
Hence, under \eqref{eq:no_boundary_crossing_condition_paper}, the approximation is exact. 
Accordingly, in the following analysis, we adopt this approximation and represent the detected 2D-DFT peak in the non-integer case by the vector remainder of $\mathbf f_j^{\mathrm{int}}$ modulo $\mathbf N_j$. 
A more precise analysis of the exact peak selection rule in the non-integer vector case is left for future work.

\subsection{Accuracy and Robustness Analysis}\label{s4.3}

In the previous two subsections, we established the multi-subarray framework to enlarge the unambiguous range and introduced an approximate 2D-DFT peak model for non-integer frequency vectors $\mathbf{f}$. When the matrix moduli used in the MD-CRT are fixed, replacing the MD-CRT by its robust version does not change the size of the unambiguous range, because the corresponding lcrm remains the same. Therefore, the main issue in this subsection is not the unambiguous range itself, but the sensitivity of the recovery result to practical errors in the detected vector remainders. We now analyze how these errors affect the recovery of $\mathbf g$, and derive conditions for robust reconstruction.

In the non-integer case, intrinsic vector remainder errors arise even without additive noise. These errors come from the quantization of the frequency vectors and the subsequent rounding of the scaled parameter vector. We first analyze this noiseless case, and then extend the result to noisy vector remainder detection.

\subsubsection{Quantization Errors for Frequency Vector $\mathbf{f}$}

We first analyze the noiseless case. Recall that, for the $j$-th subarray,
\[
\mathbf f_j=\frac{1}{\beta_j}\mathbf g,
\qquad j=1,\dots,J.
\]
As established in Subsection~\ref{s4.2}, in the non-integer case we represent the detected 2D-DFT peak by the vector remainder of
\[
\mathbf f_j^{\mathrm{int}}=[\mathbf f_j],
\]
and define the corresponding quantization error by
\begin{equation}\label{eq:ej_def}
\mathbf e_j
\triangleq
\mathbf f_j^{\mathrm{int}}-\mathbf f_j.
\end{equation}
Then,
\begin{equation}\label{eq:ej_bound_inf}
\|\mathbf e_j\|_\infty \leq \frac{1}{2},
\qquad j=1,\dots,J.
\end{equation}
Therefore, there exists an integer vector $\mathbf n_j\in\mathbb Z^2$ such that
\begin{equation}\label{eq:fint_rem_rep}
\mathbf f_j^{\mathrm{int}}
=
\mathbf N_j\mathbf n_j+\mathbf r_j,
\qquad
\mathbf r_j\in\mathcal N(\mathbf N_j),
\end{equation}
where $\mathbf r_j$ is the vector remainder of $\mathbf f_j^{\mathrm{int}}$ modulo $\mathbf N_j$.
Substituting $\mathbf f_j^{\mathrm{int}}=\mathbf f_j+\mathbf e_j$ and $\mathbf f_j=\mathbf g/\beta_j$ into \eqref{eq:fint_rem_rep} yields
\begin{equation}\label{eq:g_beta_e_relation}
\mathbf g+\beta_j\mathbf e_j
=
\beta_j\mathbf N_j\mathbf n_j+\beta_j\mathbf r_j,
\qquad j=1,\dots,J.
\end{equation}

We next convert \eqref{eq:g_beta_e_relation} into an integer-valued congruence system suitable for robust MD-CRT. Specifically, introduce a common positive integer $k$ such that the components of
\begin{equation}\label{eq:k_clear}
\mathbf A_j
\triangleq
k\beta_j\mathbf N_j \in \mathbb Z^{2\times 2},
\qquad
\mathbf b_j
\triangleq
k\beta_j\mathbf r_j \in \mathbb Z^2,
\end{equation}
are all integers, for all $j=1,\dots,J$.
Multiplying \eqref{eq:g_beta_e_relation} by $k$ gives
\begin{equation}\label{eq:kg_relation}
k\mathbf g + k\beta_j\mathbf e_j
=
\mathbf A_j\mathbf n_j+\mathbf b_j,
\qquad j=1,\dots,J.
\end{equation}
The right hand side of \eqref{eq:kg_relation} is integer-valued, while $k\mathbf g$ is not necessarily an integer vector. Define
\begin{equation}\label{eq:delta_def}
\mathbf x
\triangleq
[k\mathbf g]
\in\mathbb Z^2 \quad
\text{and} \quad
\boldsymbol\delta
\triangleq
\mathbf x-k\mathbf g.
\end{equation}
Then
\begin{equation}\label{eq:delta_bound}
\|\boldsymbol\delta\|_\infty \le \frac{1}{2}.
\end{equation}
Substituting \eqref{eq:delta_def} into \eqref{eq:kg_relation}, we obtain
\begin{equation}\label{eq:x_congruence_with_error}
\mathbf x
=
\mathbf A_j\mathbf n_j+\mathbf b_j+\boldsymbol\eta_j,
\qquad j=1,\dots,J,
\end{equation}
where
\begin{equation}\label{eq:eta_def}
\boldsymbol\eta_j
\triangleq
\boldsymbol\delta-k\beta_j\mathbf e_j.
\end{equation}
Using \eqref{eq:ej_bound_inf} and \eqref{eq:delta_bound}, we have
\begin{equation}\label{eq:eta_bound}
\|\boldsymbol\eta_j\|_\infty
\le
\|\boldsymbol\delta\|_\infty+\|k\beta_j\mathbf e_j\|_\infty
\le
\frac{1}{2}+\frac{k|\beta_j|}{2}.
\end{equation}

Equation \eqref{eq:x_congruence_with_error} reformulates the recovery problem as an integer-valued congruence system for the unknown integer vector $\mathbf x=[k\mathbf g]$. Once $\mathbf x$ is estimated, the original parameter vector $\mathbf g$ is recovered by dividing it by $k$. Here, the detected quantities $\mathbf b_j=k\beta_j\mathbf r_j$ are no longer exact vector remainders of $\mathbf x$, because they are accompanied by the errors $\boldsymbol\eta_j$ caused by quantization. Since the MD-CRT is sensitive to such vector remainder errors, robust MD-CRT is required for robust recovery.

To state the robust recovery condition, we next introduce the quantity that determines the allowable error threshold in the robust MD-CRT \cite{MD2,mstage-mdcrt}. Let $\mathbf G_{i,j}$ be a greatest common left divisor (gcld) of $\mathbf A_i$ and $\mathbf A_j$, and define
\begin{equation}\label{eq:Lambda_def}
\Lambda
\triangleq
\max_{1\le i\le J}
\min_{\substack{1\le j\le J\\ j\neq i}}
\lambda_{\mathcal L(\mathbf G_{i,j})},
\end{equation}
where $\lambda_{\mathcal L(\mathbf G_{i,j})}$ denotes the minimum nonzero distance of the lattice generated by $\mathbf G_{i,j}$ under the $\ell_\infty$ norm. 
With this notation, the erroneous system in \eqref{eq:x_congruence_with_error} is now in the standard form required by robust MD-CRT.
We can therefore apply the corresponding result to characterize when the integer vector $\mathbf x$ can be robustly recovered from the matrix moduli $\{\mathbf A_j\}_{j=1}^J$ and the detected vector remainders $\{\mathbf b_j\}_{j=1}^J$.

\begin{prop}[Robust MD-CRT~\cite{MD2}]\label{prop:robust_mdcrt_specialized}
Consider the erroneous congruence system in \eqref{eq:x_congruence_with_error}.
If there exists a constant $\tau_x$ such that
\begin{equation}\label{eq:tau_x_condition}
\|\boldsymbol\eta_j\|_\infty \le \tau_x < \frac{\Lambda}{4},
\qquad j=1,\dots,J,
\end{equation}
then the integer vector $\mathbf x$ can be robustly recovered from $\{\mathbf A_j\}_{j=1}^J$ and $\{\mathbf b_j\}_{j=1}^J$ and the resulting estimate $\hat{\mathbf x}$ satisfies
\begin{equation}\label{eq:x_error_bound}
\|\hat{\mathbf x}-\mathbf x\|_\infty \le \tau_x.
\end{equation}
\end{prop}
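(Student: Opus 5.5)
This is the robust MD-CRT of \cite{MD2}, specialized to the system \eqref{eq:x_congruence_with_error}. The plan follows the standard two-stage argument: recover the folding vectors relative to a reference modulus, then average. I also need to assume, as in Proposition~\ref{prop:MD-CRT}, that $\mathbf x$ lies in the unambiguous range $\mathcal N(\mathbf R)$ with $\mathbf R=\operatorname{lcrm}(\mathbf A_1,\dots,\mathbf A_J)$.

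Fix the reference index $i$ that attains the maximum in \eqref{eq:Lambda_def}, so that $\lambda_{\mathcal L(\mathbf G_{i,j})}\ge\Lambda$ for every $j\ne i$. Subtracting the $i$-th equation of \eqref{eq:x_congruence_with_error} from the $j$-th gives
\[
\mathbf A_i\mathbf n_i-\mathbf A_j\mathbf n_j=\mathbf b_j-\mathbf b_i+\boldsymbol\eta_j-\boldsymbol\eta_i .
\]
The left side lies in $\mathcal L(\mathbf G_{i,j})$, since $\mathbf G_{i,j}$ left-divides both $\mathbf A_i$ and $\mathbf A_j$. The error term satisfies $\|\boldsymbol\eta_j-\boldsymbol\eta_i\|_\infty\le 2\tau_x<\Lambda/2$. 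Hence $\mathbf b_j-\mathbf b_i$ lies strictly within $\ell_\infty$-distance $\Lambda/2$ of a unique point of $\mathcal L(\mathbf G_{i,j})$. Uniqueness holds because two lattice points within $\Lambda/2$ of the same vector would be closer than $\lambda_{\mathcal L(\mathbf G_{i,j})}$. Rounding $\mathbf b_j-\mathbf b_i$ to the nearest point of this lattice (a closest vector problem, exact here) therefore recovers $\mathbf A_i\mathbf n_i-\mathbf A_j\mathbf n_j$ exactly. This yields the error-free integer congruences $\mathbf A_j\mathbf n_j\equiv \mathbf A_i\mathbf n_i - \mathbf c_{ij}$, with $\mathbf c_{ij}$ known.

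Next, I would write $\mathbf A_i=\mathbf G_{i,j}\mathbf P_{ij}$ and $\mathbf A_j=\mathbf G_{i,j}\mathbf Q_{ij}$, where $\mathbf P_{ij}$ and $\mathbf Q_{ij}$ are left coprime. The recovered identity $\mathbf G_{i,j}(\mathbf P_{ij}\mathbf n_i-\mathbf Q_{ij}\mathbf n_j)=\mathbf c_{ij}$ then determines $\mathbf n_i$ modulo a right factor. This gives a system of error-free congruences on $\mathbf n_i$, which I would solve by the classical MD-CRT, Proposition~\ref{prop:MD-CRT}. The range assumption on $\mathbf x$ guarantees that $\mathbf n_i$ lies in the corresponding FPD, so it is uniquely determined. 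The main obstacle is this step: verifying that the lcrm of the reduced moduli indeed matches $\mathbf R=\operatorname{lcrm}(\mathbf A_i)$ up to the left factor $\mathbf A_i$, so that uniqueness of $\mathbf n_i$ is equivalent to $\mathbf x\in\mathcal N(\mathbf R)$. For this I would invoke the matrix gcld/lcrm identities of \cite{MD2,guo25} rather than reprove them.

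Once $\mathbf n_i$ is known, all $\mathbf n_j$ follow from the recovered differences. I would then form $\hat{\mathbf x}_j=\mathbf A_j\mathbf n_j+\mathbf b_j$, each satisfying $\hat{\mathbf x}_j-\mathbf x=-\boldsymbol\eta_j$. Taking $\hat{\mathbf x}$ as any convex combination of the $\hat{\mathbf x}_j$ (e.g.\ the reference estimate, or the average rounded to an integer vector) gives
\[
\|\hat{\mathbf x}-\mathbf x\|_\infty\le\max_j\|\boldsymbol\eta_j\|_\infty\le\tau_x,
\]
which is \eqref{eq:x_error_bound}. If rounding is used, I would note that $\mathbf x$ is an integer vector, so the rounding does not push the error past $\tau_x$ when $\tau_x<1/2$. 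Otherwise I would take the unrounded convex combination.
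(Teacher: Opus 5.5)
Your first stage is sound. With $i$ the maximizing index in \eqref{eq:Lambda_def}, $\|\boldsymbol\eta_j-\boldsymbol\eta_i\|_\infty\le 2\tau_x<\Lambda/2\le\lambda_{\mathcal L(\mathbf G_{i,j})}/2$, so rounding $\mathbf b_j-\mathbf b_i$ to $\mathcal L(\mathbf G_{i,j})$ returns $\mathbf A_i\mathbf n_i-\mathbf A_j\mathbf n_j$ exactly. The lcrm identification you flag as the main obstacle is actually immediate. If $\mathbf P_{ij}\mathbf U_{ij}$ is an lcrm of $\mathbf P_{ij},\mathbf Q_{ij}$, then $\mathcal L(\mathbf A_i\mathbf U_{ij})=\mathcal L(\mathbf A_i)\cap\mathcal L(\mathbf A_j)$. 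Hence $\mathbf A_i\operatorname{lcrm}_{j\ne i}(\mathbf U_{ij})$ generates $\bigcap_j\mathcal L(\mathbf A_j)=\mathcal L(\mathbf R)$, and you may take $\mathbf V=\mathbf A_i^{-1}\mathbf R$.

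The step that genuinely fails is the assertion that $\mathbf x\in\mathcal N(\mathbf R)$ forces $\mathbf n_i\in\mathcal N(\mathbf V)$. The $\mathbf n_j$ in \eqref{eq:x_congruence_with_error} are not the folding vectors of $\mathbf x$, because $\mathbf b_j+\boldsymbol\eta_j$ may leave $\mathcal P(\mathbf A_j)$. Take $\mathbf A_1=24\mathbf I$ and $\mathbf A_2=40\mathbf I$, so that $\mathbf R=120\mathbf I$, $\Lambda=8$, and $\tau_x=1$ is admissible, and take $\mathbf b_1=\mathbf b_2=\mathbf 0$. Then $\mathbf x=\mathbf 0$ with $\boldsymbol\eta_j=\mathbf 0$ is consistent with these data. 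So is $\mathbf x=[119,0]^\top$ with $\boldsymbol\eta_j=[-1,0]^\top$, $\mathbf n_1=[5,0]^\top$, $\mathbf n_2=[3,0]^\top$, which lie outside $\mathcal N(\mathbf A_j^{-1}\mathbf R)$. Both vectors lie in $\mathcal N(\mathbf R)$, so under the hypothesis $\mathbf x\in\mathcal N(\mathbf R)$ alone no estimator can satisfy \eqref{eq:x_error_bound} for both.

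The implication fails in two dimensions even with no errors. There $\mathbf n_i$ is the componentwise floor of $\mathbf A_i^{-1}\mathbf x$, and the floor of a point of $\mathcal P(\mathbf V)$ need not lie in $\mathcal P(\mathbf V)$. For $\mathbf A_1=2\mathbf I$ and $\mathbf A_2=\mathbf R=\begin{pmatrix}2&2\\-2&2\end{pmatrix}$, where $i=1$ attains the maximum since $J=2$, we have $\mathbf x=[1,-1]^\top\in\mathcal N(\mathbf R)$. Yet $\mathbf n_1=[0,-1]^\top\notin\mathcal N(\mathbf A_1^{-1}\mathbf R)$.

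What your first two stages legitimately give is $\mathbf n_i$ modulo $\mathbf V$ only. Any solution $\hat{\mathbf n}_i$ yields $\hat{\mathbf x}_i=\mathbf A_i\hat{\mathbf n}_i+\mathbf b_i=\mathbf x-\boldsymbol\eta_i+\mathbf R\mathbf t$ for some unknown $\mathbf t\in\mathbb Z^2$. Every $\hat{\mathbf x}_j$ carries the same shift $\mathbf R\mathbf t$, which averaging cannot remove. Without further hypotheses you only obtain $\min_{\mathbf t\in\mathbb Z^2}\|\hat{\mathbf x}-\mathbf x-\mathbf R\mathbf t\|_\infty\le\tau_x$.

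To get \eqref{eq:x_error_bound} literally, reduce $\hat{\mathbf x}_i$ into $\mathcal P(\mathbf R)$ and add a margin assumption. For example, require the closed $\ell_\infty$-ball of radius $\tau_x$ about $\mathbf x$ to lie in $\mathcal P(\mathbf R)$. Then $\mathbf x-\boldsymbol\eta_i$ is the unique representative of its coset in $\mathcal P(\mathbf R)$, the reduced estimate equals it, and the bound follows.

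The paper itself gives no proof: it imports the statement from \cite{MD2} and leaves this range issue implicit. Your added hypothesis $\mathbf x\in\mathcal N(\mathbf R)$ is the right instinct but does not suffice on its own.
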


This proposition directly yields the following recovery bounds for the parameter vector $\mathbf g$.
The corresponding estimate of $\mathbf g$ is given by
\begin{equation}\label{eq:g_hat_from_x}
\hat{\mathbf g}
\triangleq
\frac{1}{k}\hat{\mathbf x}.
\end{equation}
Since $\mathbf x=k\mathbf g+\boldsymbol\delta$, we have
\[
\hat{\mathbf g}-\mathbf g
=
\frac{1}{k}(\hat{\mathbf x}-\mathbf x)
+
\frac{1}{k}\boldsymbol\delta.
\]
Using \eqref{eq:delta_bound} and \eqref{eq:x_error_bound}, we obtain
\begin{equation}\label{eq:g_total_bound}
\|\hat{\mathbf g}-\mathbf g\|_\infty
\le
\frac{\tau_x}{k}+\frac{1}{2k}.
\end{equation}

Let
\begin{equation}\label{eq:beta_max_def}
\beta_{\max}
\triangleq
\max_{1\le j\le J}|\beta_j|.
\end{equation}
From \eqref{eq:eta_bound} and \eqref{eq:tau_x_condition}, a sufficient condition for robust recovery in the noiseless case is
\begin{equation}\label{eq:robust_sufficient}
\frac{1}{2}+\frac{k\beta_{\max}}{2}
<
\frac{\Lambda}{4}.
\end{equation}
Under \eqref{eq:robust_sufficient}, we may choose
\begin{equation}\label{eq:tau_x_choice}
\tau_x
\triangleq
\frac{1}{2}+\frac{k\beta_{\max}}{2},
\end{equation}
which satisfies \eqref{eq:tau_x_condition}.
Substituting \eqref{eq:tau_x_choice} into \eqref{eq:g_total_bound} gives the reconstruction error bound
\begin{equation}\label{eq:g_error_explicit}
\|\hat{\mathbf g}-\mathbf g\|_\infty
\le
\frac{\beta_{\max}}{2}+\frac{1}{k}.
\end{equation}
This bound characterizes the recovery accuracy of $\mathbf g$ in the noiseless case.

We summarize the above derivation by the following noiseless robust recovery result. For comparison, related robust recovery results for one-dimensional linear arrays with two subarrays under quantization errors were studied in~\cite{gangli}.

\begin{thm}[Noiseless robust recovery]\label{thm:noiseless_recovery}
If the sufficient condition in \eqref{eq:robust_sufficient} holds, then the parameter vector $\mathbf g$ can be robustly recovered.
Moreover, the reconstruction error is bounded as in \eqref{eq:g_error_explicit}.
\end{thm}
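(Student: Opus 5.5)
The plan is to assemble the chain of reductions already carried out above into a single argument, checking that each step only uses facts established earlier.

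First, for each subarray $j$, we start from the approximate 2D-DFT peak model of Subsection~\ref{s4.2}. It lets us treat the detected peak as the vector remainder $\mathbf r_j$ of $\mathbf f_j^{\mathrm{int}}=[\mathbf f_j]$ modulo $\mathbf N_j$, with quantization error $\mathbf e_j$ satisfying \eqref{eq:ej_bound_inf}. Substituting $\mathbf f_j=\mathbf g/\beta_j$ gives the real-valued relation \eqref{eq:g_beta_e_relation}. We then fix a common positive integer $k$ as in \eqref{eq:k_clear}, so that every $\mathbf A_j$ and $\mathbf b_j$ is integer. Rounding $k\mathbf g$ to $\mathbf x=[k\mathbf g]$ yields the integer congruence system \eqref{eq:x_congruence_with_error}. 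The only care needed here is that $\mathbf x$, $\mathbf A_j\mathbf n_j$ and $\mathbf b_j$ are all integer, so the error $\boldsymbol\eta_j$ defined in \eqref{eq:eta_def} is automatically integer-valued. This places the system exactly in the erroneous-remainder form assumed by Proposition~\ref{prop:robust_mdcrt_specialized}.

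Second, we bound the errors uniformly. By the triangle inequality, \eqref{eq:delta_bound} and \eqref{eq:ej_bound_inf}, each error satisfies $\|\boldsymbol\eta_j\|_\infty\le \tfrac12+\tfrac{k|\beta_j|}{2}\le \tfrac12+\tfrac{k\beta_{\max}}{2}$, using $k>0$ and the definition \eqref{eq:beta_max_def}. We set $\tau_x$ as in \eqref{eq:tau_x_choice}. Hypothesis \eqref{eq:robust_sufficient} is then precisely the statement $\tau_x<\Lambda/4$, so condition \eqref{eq:tau_x_condition} holds. Proposition~\ref{prop:robust_mdcrt_specialized} therefore returns an estimate $\hat{\mathbf x}$ with $\|\hat{\mathbf x}-\mathbf x\|_\infty\le\tau_x$. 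This is the meaning of ``robustly recovered'' in the statement, transferred to $\mathbf g$ through $\hat{\mathbf g}=\hat{\mathbf x}/k$.

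Third, we derive the error bound. We write $\hat{\mathbf g}-\mathbf g=\tfrac1k(\hat{\mathbf x}-\mathbf x)+\tfrac1k\boldsymbol\delta$ and obtain \eqref{eq:g_total_bound}. Inserting $\tau_x$ gives
\[
\tfrac{1}{k}\Bigl(\tfrac12+\tfrac{k\beta_{\max}}{2}\Bigr)+\tfrac{1}{2k}=\tfrac{\beta_{\max}}{2}+\tfrac1k,
\]
which is \eqref{eq:g_error_explicit}.

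The main obstacle is not computational; it is ensuring that the hypotheses of the robust MD-CRT are genuinely met. Two points need checking. First, $\mathbf x$ must lie in the unambiguous range $\mathcal N(\operatorname{lcrm}(\mathbf A_1,\dots,\mathbf A_J))$, which is implicit in the paper's setting. Second, $\Lambda$ must be computed from gclds of the scaled moduli $\mathbf A_j$ rather than of $\beta_j\mathbf N_j$. I would state the range assumption explicitly, as inherited from Proposition~\ref{prop:MD-CRT}, and otherwise treat the theorem as a direct corollary of Proposition~\ref{prop:robust_mdcrt_specialized}.
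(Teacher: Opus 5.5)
Your proposal is correct and is essentially the paper's own argument: the theorem simply summarizes the derivation from \eqref{eq:g_beta_e_relation} to \eqref{eq:g_error_explicit}. That derivation scales by $k$, rounds to $\mathbf x=[k\mathbf g]$, bounds $\|\boldsymbol\eta_j\|_\infty\le\tfrac12+\tfrac{k\beta_{\max}}{2}$, invokes Proposition~\ref{prop:robust_mdcrt_specialized} with that $\tau_x$, and divides by $k$, exactly as you do. Your explicit requirement that $\mathbf x$ lie in the unambiguous range of an lcrm of the $\mathbf A_j$ is a hypothesis the paper leaves implicit in the robust MD-CRT, so stating it tightens the argument rather than departing from it.
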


\subsubsection{Remainder Errors Induced from Noise}
We now consider additive noises in signals that may cause errors in the detected vector remainders, in addition to the intrinsic quantization errors already analyzed above.
Suppose that the detected vector remainder for the $j$-th subarray is
\[
\tilde{\mathbf r}_j
=
\mathbf r_j+\Delta\mathbf r_j,
\qquad j=1,\dots,J,
\]
where $\Delta\mathbf r_j$ denotes the vector remainder error caused by noise.

Recall from \eqref{eq:k_clear} that, in the noiseless case, we multiply each vector remainder $\beta_j \mathbf r_j$ by the factor $k$ so that the resulting quantity
$\mathbf b_j = k\beta_j\mathbf r_j$
is integer-valued. 
In the noisy case, using the same scaling as in \eqref{eq:k_clear}, we obtain the corresponding observed integer vector remainder
\begin{equation}\label{eq:b_tilde_def}
\tilde{\mathbf b}_j
\triangleq
k\beta_j\tilde{\mathbf r}_j
=
\mathbf b_j+k\beta_j\Delta\mathbf r_j.
\end{equation}
The derivation now proceeds in exactly the same way as in the noiseless case. Repeating the above steps, we obtain
\begin{equation}\label{eq:x_congruence_with_noise}
\mathbf x
=
\mathbf A_j\mathbf n_j+\tilde{\mathbf b}_j+\tilde{\boldsymbol\eta}_j,
\qquad j=1,\dots,J,
\end{equation}
where the resulting vector remainder error is
\begin{equation}\label{eq:eta_noise_def}
\tilde{\boldsymbol\eta}_j
=
\boldsymbol\delta-k\beta_j\mathbf e_j-k\beta_j\Delta\mathbf r_j.
\end{equation}
Hence,
\begin{equation}\label{eq:eta_noise_bound}
\|\tilde{\boldsymbol\eta}_j\|_\infty
\le
\frac{1}{2}
+
\frac{k|\beta_j|}{2}
+
k|\beta_j|\,\|\Delta\mathbf r_j\|_\infty.
\end{equation}

Therefore, the noisy case in \eqref{eq:x_congruence_with_noise} remains within the robust MD-CRT framework.
If there exists $\tau_x$ such that
\begin{equation}\label{eq:tau_x_condition_noise}
\|\tilde{\boldsymbol\eta}_j\|_\infty \le \tau_x < \frac{\Lambda}{4},
\qquad j=1,\dots,J,
\end{equation}
then Proposition~\ref{prop:robust_mdcrt_specialized} still applies, and the resulting estimate $\hat{\mathbf x}$ satisfies
\[
\|\hat{\mathbf x}-\mathbf x\|_\infty \le \tau_x.
\]
Consequently,
\begin{equation}\label{eq:g_error_noise_general}
\|\hat{\mathbf g}-\mathbf g\|_\infty
\le
\frac{\tau_x}{k}+\frac{1}{2k}.
\end{equation}
Using \eqref{eq:eta_noise_bound} and \eqref{eq:tau_x_condition_noise}, a sufficient condition for robust recovery in the noisy case is
\begin{equation}\label{eq:robust_sufficient_noise}
\frac{1}{2}
+
\frac{k\beta_{\max}}{2}
+
k\max_{1\le j\le J}\big(|\beta_j|\,\|\Delta\mathbf r_j\|_\infty\big)
<
\frac{\Lambda}{4}.
\end{equation}
Under \eqref{eq:robust_sufficient_noise}, we may choose
\begin{equation}\label{eq:tau_x_choice_noise}
\tau_x
=
\frac{1}{2}
+
\frac{k\beta_{\max}}{2}
+
k\max_{1\le j\le J}\big(|\beta_j|\,\|\Delta\mathbf r_j\|_\infty\big),
\end{equation}
which satisfies \eqref{eq:tau_x_condition_noise}.
Substituting \eqref{eq:tau_x_choice_noise} into \eqref{eq:g_error_noise_general} gives
\begin{equation}\label{eq:g_noise_final}
\|\hat{\mathbf g}-\mathbf g\|_\infty
\le
\frac{\beta_{\max}}{2}
+
\max_{1\le j\le J}\big(|\beta_j|\,\|\Delta\mathbf r_j\|_\infty\big)
+
\frac{1}{k}.
\end{equation}

We summarize the above analysis by the following noisy robust recovery result.

\begin{thm}[Noisy robust recovery]\label{thm:noisy_recovery}
If the sufficient condition in \eqref{eq:robust_sufficient_noise} holds, then the parameter vector $\mathbf g$ can be robustly recovered under noisy vector remainder detection.
Moreover, the reconstruction error is bounded as in \eqref{eq:g_noise_final}.
\end{thm}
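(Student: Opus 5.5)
The plan is to reduce the noisy case to Proposition~\ref{prop:robust_mdcrt_specialized} in the same way as the noiseless case in Theorem~\ref{thm:noiseless_recovery}. The only new ingredient is the extra perturbation $k\beta_j\Delta\mathbf r_j$.

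First I would rewrite the noiseless relation \eqref{eq:x_congruence_with_error} in terms of the observed quantities. Substituting $\mathbf b_j=\tilde{\mathbf b}_j-k\beta_j\Delta\mathbf r_j$ from \eqref{eq:b_tilde_def} into $\mathbf x=\mathbf A_j\mathbf n_j+\mathbf b_j+\boldsymbol\eta_j$ gives \eqref{eq:x_congruence_with_noise}, with error $\tilde{\boldsymbol\eta}_j$ as in \eqref{eq:eta_noise_def}.

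A subtle point is that Proposition~\ref{prop:robust_mdcrt_specialized} is applied to integer moduli $\mathbf A_j$ and detected remainders. If $\tilde{\mathbf b}_j$ is not integer-valued, I would either read the robust MD-CRT as accepting real-valued perturbed remainders, as in \cite{MD2}, or round $\tilde{\mathbf b}_j$. Rounding would add at most $1/2$ to the error, and I will not pursue it. In the first reading, the system is in the required standard form: the unknown integer $\mathbf x$, integer moduli $\mathbf A_j$, and error terms $\tilde{\boldsymbol\eta}_j$.

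Next I would bound the error. By the triangle inequality, \eqref{eq:delta_bound} and \eqref{eq:ej_bound_inf},
\[
\|\tilde{\boldsymbol\eta}_j\|_\infty\le \tfrac12+\tfrac{k|\beta_j|}{2}+k|\beta_j|\,\|\Delta\mathbf r_j\|_\infty ,
\]
which is \eqref{eq:eta_noise_bound}. Taking the maximum over $j$ shows that every $\|\tilde{\boldsymbol\eta}_j\|_\infty$ is at most the $\tau_x$ of \eqref{eq:tau_x_choice_noise}. Hypothesis \eqref{eq:robust_sufficient_noise} says exactly that $\tau_x<\Lambda/4$. Note that $\Lambda$ depends only on the $\mathbf A_j$, which the noise does not change. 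Hence \eqref{eq:tau_x_condition_noise} holds, and Proposition~\ref{prop:robust_mdcrt_specialized} yields $\hat{\mathbf x}$ with $\|\hat{\mathbf x}-\mathbf x\|_\infty\le\tau_x$.

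Finally I would set $\hat{\mathbf g}=\hat{\mathbf x}/k$ and use $\mathbf x=k\mathbf g+\boldsymbol\delta$ to get \eqref{eq:g_error_noise_general}. Substituting \eqref{eq:tau_x_choice_noise} and simplifying $\frac{1}{2k}+\frac{1}{2k}=\frac1k$ gives \eqref{eq:g_noise_final}.

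The main obstacle is justifying that the robust MD-CRT applies when the observed remainder $\tilde{\mathbf b}_j$ is an arbitrary perturbed vector rather than an exact remainder in $\mathcal N(\mathbf A_j)$. I would handle this by relying on the formulation in \cite{MD2}, where only the decomposition $\mathbf x=\mathbf A_j\mathbf n_j+\tilde{\mathbf b}_j+\tilde{\boldsymbol\eta}_j$ with bounded $\tilde{\boldsymbol\eta}_j$ is needed. Any folding of $\tilde{\mathbf b}_j$ back into $\mathcal N(\mathbf A_j)$ only changes $\mathbf n_j$, not the error.
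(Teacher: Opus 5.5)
Your proposal is correct and follows the paper's argument step for step. You rewrite the system in terms of $\tilde{\mathbf b}_j$ to get \eqref{eq:x_congruence_with_noise}, bound $\tilde{\boldsymbol\eta}_j$ by the triangle inequality, note that \eqref{eq:robust_sufficient_noise} is exactly $\tau_x<\Lambda/4$, apply Proposition~\ref{prop:robust_mdcrt_specialized}, and divide by $k$ to reach \eqref{eq:g_noise_final}; your caveat about a possibly non-integer $\tilde{\mathbf b}_j$ is extra care the paper does not take, since it simply treats $\tilde{\mathbf b}_j$ as an observed integer remainder.
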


Theorems~\ref{thm:noiseless_recovery} and \ref{thm:noisy_recovery} complete the theoretical framework of this section. Together with the ambiguity analysis in the first subsection, we show that the proposed multi-subarray framework not only enlarges the unambiguous range, but also enables robust recovery of $\mathbf g$ in the presence of practical errors, with explicit reconstruction error bounds determined by quantization and remainder error levels.

Moreover, the bounds in \eqref{eq:g_error_explicit} and \eqref{eq:g_noise_final} directly determine the recovery accuracy of $\mathbf g$. 
In practice, we usually want this bound to be as small as possible. 
From \eqref{eq:g_noise_final}, one can see that the recovery accuracy is closely related to the quantities $\{\beta_j\}_{j=1}^J$. 
Since $\beta_j=\frac{R\lambda}{\alpha_j\det(\mathbf M_j)}$,
increasing $\alpha_j$ decreases $\beta_j$, and hence reduces both the intrinsic quantization term $\beta_{\max}/2$ and the noise related term
$\max_{1\le j\le J}\big(|\beta_j|\,\|\Delta\mathbf r_j\|_\infty\big)$.
Therefore, a larger $\alpha_j$ leads to a smaller reconstruction error bound and hence a more accurate recovery of $\mathbf g$.
However, as mentioned earlier, a larger $\alpha_j$ leads to a larger size of array, which may not be possible for a limited platform size. 
When the physical platform size is fixed, this improvement can be, however, achieved through zero-padding in the 2D-DFT.

To see this, for a fixed array geometry $\mathbf{M}$, consider two positive scaling factors $0<\alpha^{(1)}<\alpha^{(2)}$ such that
$\mathbf N^{(1)}=\alpha^{(1)}\operatorname{adj}(\mathbf M^\top)$ and
$\mathbf N^{(2)}=\alpha^{(2)}\operatorname{adj}(\mathbf M^\top)$ are both integer matrices. The smaller factor $\alpha^{(1)}$ corresponds to the actual planar array, in the sense that antenna measurements are available for all indices
$\mathbf u\in\mathcal N\!\big((\mathbf N^{(1)})^\top\big)$.
Since $\alpha^{(2)}>\alpha^{(1)}$, the corresponding larger index set $\mathcal N\!\big((\mathbf N^{(2)})^\top\big)$ contains $\mathcal N\!\big((\mathbf N^{(1)})^\top\big)$ as a subset. Hence, to perform the 2D-DFT with the size associated with $\alpha^{(2)}$, we keep the measured data on $\mathcal N\!\big((\mathbf N^{(1)})^\top\big)$ and assign zero values to those indices in $\mathcal N\!\big((\mathbf N^{(2)})^\top\big)\setminus \mathcal N\!\big((\mathbf N^{(1)})^\top\big)$, i.e., zero-padding in two-dimensions. In this way, zero-padding provides a practical way to realize a larger size 2D-DFT with a larger effective $\alpha$, and hence to improve the reconstruction accuracy without changing the physical array. This follows the same idea of zero-padding in the one-dimensional case in~\cite{gangli} to improve the recovery accuracy.

\section{Simulation Results}\label{s5}

In this section, we provide a representative numerical example to illustrate the enlargement of the unambiguous range achieved by the proposed multi-subarray framework. 
We consider only the noiseless case here. 
Since this Part I of the two-part paper only focuses on the basic formulation and recovery framework, we leave detailed planar subarray designs and detailed simulations in Part II of this two-part paper, \cite{part2}. 
In this section we show, through a simple numerical example, that using two planar subarrays jointly can enlarge the unambiguous range compared with using only a single subarray.

\subsection{Simulation Setup}

In the simulation, the radar altitude is set to $H=4000~\mathrm{m}$, the platform velocity is set to $v=200~\mathrm{m/s}$, the cross-range resolution is set to $\Delta_x=2~\mathrm{m}$, and the wavelength is set to $\lambda=0.05~\mathrm{m}$. 
The terrain and road are generated numerically. 
A point target is selected on the road, and its velocity is chosen along the local road direction. 
To emphasize the role of the unambiguous range, we use the same scene setting throughout this section and compare the corresponding reconstructed target locations in three-dimensional space. 
The target position and velocity are chosen as
$\mathbf r_0=[x_0,y_0,z_0]^\top=[175.409811,\ 263.516684,\ 94.868794]^\top~\mathrm{m}$
and
$\mathbf v_0=[v_x,v_y,v_z]^\top=[3.246923,\ 9.450347,\ 8.315553]^\top~\mathrm{m/s}$.

\subsection{Unambiguous Range Enlargement by Two Planar Subarrays}

We consider the proposed co-prime planar array design with two subarrays. 
Let the two lattice generator matrices be
\[
\mathbf M_1 =
\begin{pmatrix}
2 & -1\\
0 & 2
\end{pmatrix},
\qquad
\mathbf M_2 =
\begin{pmatrix}
2 & 0\\
-1 & 2
\end{pmatrix},
\]
and choose the factor $\alpha_1=\alpha_2=5$. 
Then, according to \eqref{eq:Nj_def}, the corresponding integer matrix moduli are
\[
\mathbf N_1 =
\begin{pmatrix}
10 & 0\\
5 & 10
\end{pmatrix},
\qquad
\mathbf N_2 =
\begin{pmatrix}
10 & 5\\
0 & 10
\end{pmatrix}.
\]
The resulting array is shown in Fig.~\ref{fig:s5_coprime_array}. 
The entire array occupies a $19\times 19$ square region.

\begin{figure}[htbp]
    \centering
    \includegraphics[width=0.6\linewidth]{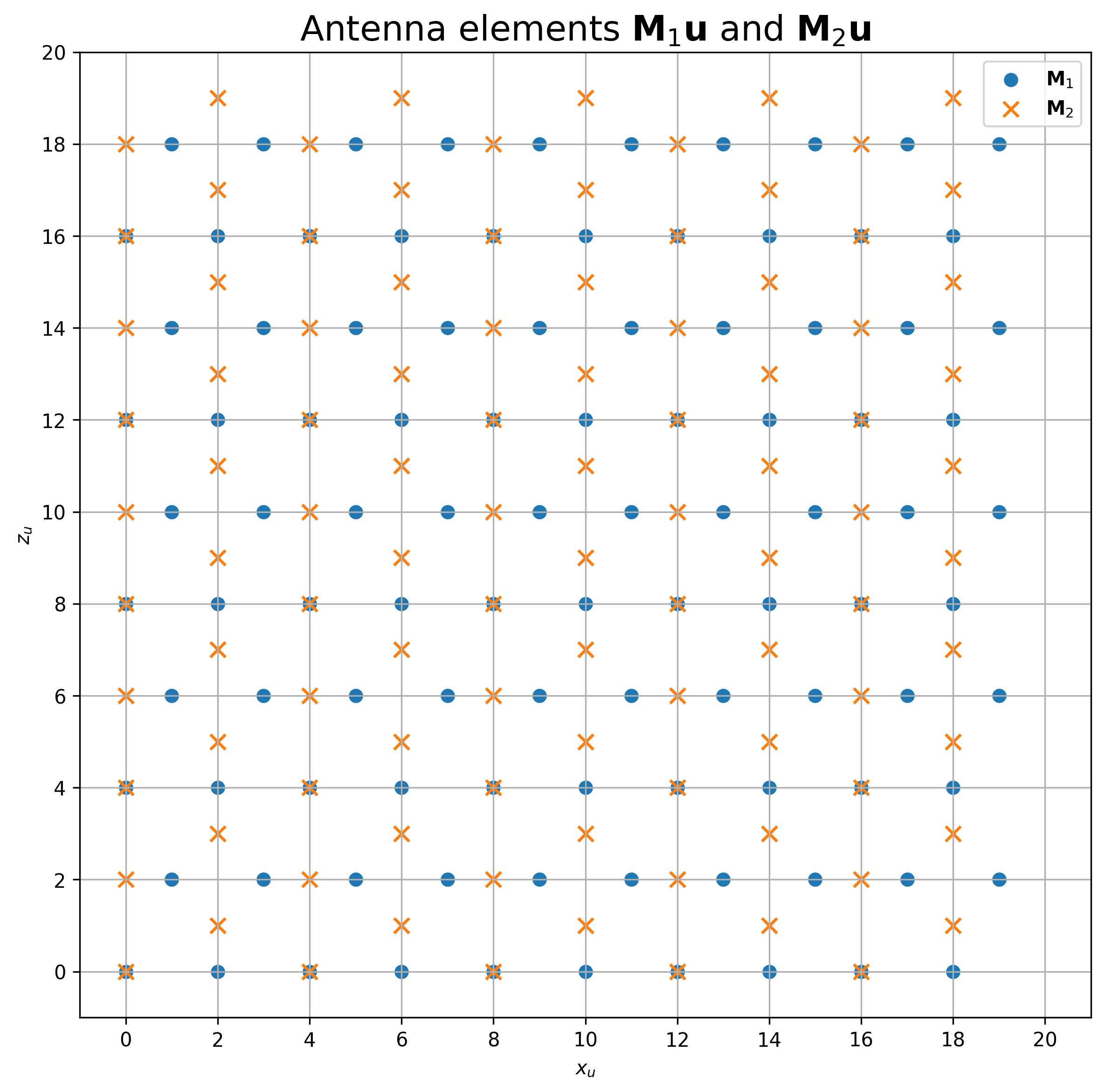}
    \caption{Placement of two co-prime planar subarrays.}
    \label{fig:s5_coprime_array}
\end{figure}

To compare the unambiguous ranges more clearly, we describe them in the discrete setting, namely, in terms of the integer frequency vector and the associated FPD set $\mathcal N(\cdot)$. 
Equivalently, we assume in this comparison that the corresponding integer-valued frequency representation is used, so that the range can be compared directly through set inclusion.

Since
\[
\beta_1=\beta_2=\beta=\frac{R\lambda}{\alpha_j\det(\mathbf M_j)}=\frac{R\lambda}{20},
\]
we have
\[
\operatorname{lcrm}(\beta\mathbf N_1,\beta\mathbf N_2)
=
\beta\,\operatorname{lcrm}(\mathbf N_1,\mathbf N_2)
=
\beta(20\mathbf I)
=
R\lambda\,\mathbf I.
\]
Hence, when both subarrays are used jointly, the corresponding unambiguous range of the parameter vector $\mathbf g$ is
$\mathcal N(R\lambda\,\mathbf I)$.

We next consider the case where only the first subarray generated by $\mathbf M_1$ is used. 
Then there is only one matrix modulus $\mathbf N_1$, and the unambiguous range reduces to
$\mathcal N(\beta\mathbf N_1)$.
Since
\[
\beta\mathbf N_1
=
\frac{R\lambda}{20}
\begin{pmatrix}
10 & 0\\
5 & 10
\end{pmatrix}
=
R\lambda
\begin{pmatrix}
\frac12 & 0\\[2mm]
\frac14 & \frac12
\end{pmatrix},
\]
the corresponding unambiguous range is
\[
\mathcal N(\beta\mathbf N_1)
=
\mathcal N\!\left(
R\lambda
\begin{pmatrix}
\frac12 & 0\\[2mm]
\frac14 & \frac12
\end{pmatrix}
\right).
\]
and it can be verified that
\[
\mathcal N(\beta\mathbf N_1)\subsetneq \mathcal N(\beta\mathbf R)=\mathcal N(R\lambda\mathbf I).
\]
This shows that using the two subarrays jointly yields a strictly larger unambiguous range than using only the subarray generated by $\mathbf M_1$.

To further illustrate the effect of unambiguous range enlargement, we show the reconstructed target locations in three-dimensional space for these two cases. 
The reconstructed target positions are marked by red points in the figures.

\begin{figure}[htbp]
    \centering
    \includegraphics[width=0.65\linewidth]{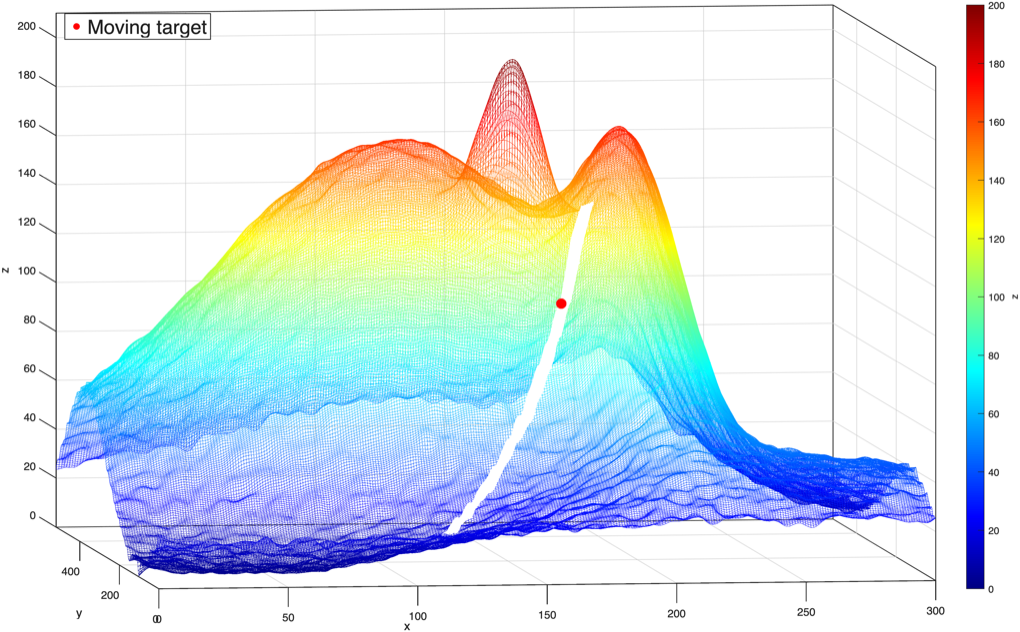}
    \caption{Reconstructed target location for the proposed two-subarray co-prime planar array design. The recovered point remains on the road.}
    \label{fig:s5_design1_3d}
\end{figure}

\begin{figure}[htbp]
    \centering
    \includegraphics[width=0.65\linewidth]{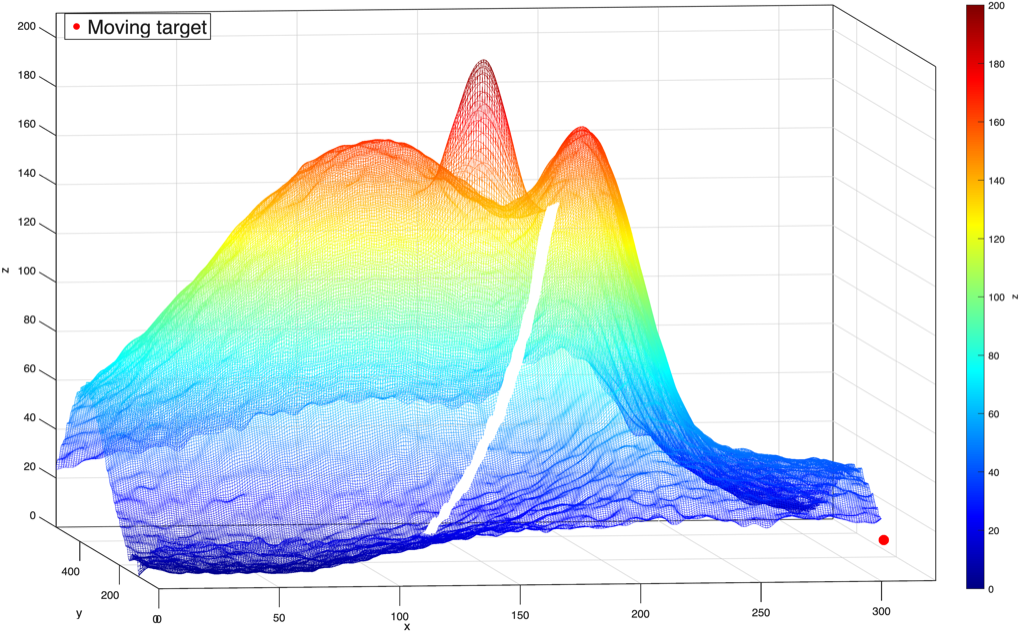}
    \caption{Reconstructed target location using only the subarray generated by $\mathbf M_1$. The recovered point is shifted away from the road.}
    \label{fig:s5_design1_m1_3d}
\end{figure}

The results are shown in Figs.~\ref{fig:s5_design1_3d} and~\ref{fig:s5_design1_m1_3d}. 
For the proposed two-subarray design in Fig.~\ref{fig:s5_design1_3d}, the reconstructed target remains on the road, which indicates correct localization in this noiseless example. 
In contrast, when only the subarray generated by $\mathbf M_1$ is used, the reconstructed target in Fig.~\ref{fig:s5_design1_m1_3d} is shifted away from the road. The similar ambiguity also occurs when only the subarray generated by $\mathbf M_2$ is used. 
These results visually confirm that using the two subarrays jointly enlarges the unambiguous range and leads to correct target localization in this example.

\section{Conclusion}\label{s6}

In this paper, we studied moving target SAR imaging with planar antenna arrays under the MD-CRT framework. For a target moving over a three-dimensional terrain (or in air), we formulated the motion-induced cross-range shift and the target height in a unified two-dimensional form, and showed that, after 2D-DFT processing across a planar array, they admit a vector remainder representation. Starting from a general 2D-DFT matrix modulus formulation, we further showed that the associated 2D-DFT matrix modulus plays an essential role in the two-dimensional setting, since it affects the propagation of vector remainder errors to the desired parameter vector. Under a fixed array geometry and antenna number constraint, we derived an optimal construction of this matrix that minimizes the error amplification factor.

Based on this formulation, we developed a multi-subarray ambiguity resolution framework that enlarges the unambiguous range of a single planar array and enables the joint recovery of the motion-induced shift and the target height. To address practical estimation errors, we introduced an approximate 2D-DFT peak model for non-integer frequency vectors, and incorporated robust MD-CRT into the proposed framework. We then established sufficient conditions together with explicit reconstruction error bounds in both noiseless and noisy settings. Numerical results further illustrated that using two planar subarrays jointly yields a larger unambiguous range than using a single planar array alone. Detailed planar subarray designs and more detailed simulations on robustness are presented in Part II of this two-part paper \cite{part2}, where we further show that recovery performance depends not only on the number of antennas but also on the array geometry.

\end{document}